\definecolor{c1}{RGB}{0,90,187}
\definecolor{c2}{RGB}{255,213,0}
\theoremstyle{definition}
\newtheorem{theorem}{Theorem}[section]
\newtheorem{corollary}[theorem]{Corollary}
\newtheorem{definition}[theorem]{Definition}
\newtheorem{lemma}[theorem]{Lemma}
\newtheorem{proposition}[theorem]{Proposition}
\theoremstyle{remark}
\DeclareMathOperator{\I}{i}
\DeclareMathOperator{\diff}{d}
\DeclareMathOperator{\Max}{Max}
\newcommand{\doublewidetilde}[1]{{%
  \mathpalette\double@widetilde{#1}%
}}
\newcommand{\double@widetilde}[2]{%
  \sbox\z@{$\m@th#1\widetilde{#2}$}%
  \ht\z@=.9\ht\z@
  \widetilde{\box\z@}%
}
\numberwithin{equation}{section}
\title{\textbf{Structure of deformed $w_{1+\infty}$ symmetry and topological generalization in Celestial CFT }} 
\author[1]{\textsc{Pavel Drozdov}}
\author[2]{\textsc{Taro Kimura}}
\affil[1]{Universit\`a degli Studi di Udine, Italy}
\affil[2]{%
Institut de Math\'ematiques de Bourgogne, Universit\'e de Bourgogne, CNRS, France}
\date{}
\begin{document}

\maketitle

\begin{abstract}
We investigate a deformation of $w_{1+\infty}$ algebra recently introduced in \cite{Mago_EtAl2021.Deformed_infinity_Algebras_Celestial_CFT}  in the context of Celestial CFT that we denote by $\widetilde{W}_{1+\infty}$ algebra. 
We obtain the operator product expansions of the generating currents of this algebra and explore its supersymmetric and topological generalizations.
\end{abstract}

\tableofcontents
\vspace{1.5em}
\hrule

\section{Introduction}
Celestial holography provides new powerful tools for studying quantum gravity by applying a holographic principle to asymptotically flat spacetimes (AFS)~\cite{Pasterski_EtAl2021.Celestial_Holography}.
 It was shown in \cite{Himwich_Strominger2019.Celestial_Current_Algebra_Low_Subleading_Soft_Theorem, 
Guevara_EtAl2021.Holographic_Symmetry_Algebras_Gauge_Theory_Gravity, Donnay_EtAl2019.Conformally_Soft_Photons_Gravitons} that each of the soft theorems
 (the linear relations between scattering amplitudes of soft particles, \textit{i.e.}, the particles with energy $\omega \to 0$) in gauge theory and gravity in AFS is associated with an infinite number of symmetries and conservation laws on the celestial sphere. 
 The known symmetries do not close under commutation, implying an infinite tower of soft theorems. 
 Moreover, it has been also argued that the soft theorems in AFS are related to measurable gravitational memory effects via Fourier transform~\cite{Strominger2018.Lectures_Infrared_Structure_Gravity_Gauge_Theory, Strominger_Zhiboedov2014.Gravitational_Memory_BMS_Supertranslations_Soft_Theorems} together with a discussion towards its experimental setups~\cite{Hubner_EtAl2020.Thanks_memory_measuring_gravitationalwave_memory_first_LIGO_Virgo_gravitationalwave_transient_catalog}. 

In the recent works \cite{Guevara_EtAl2021.Holographic_Symmetry_Algebras_Gauge_Theory_Gravity,Strominger2021.infinity_Celestial_Sphere,Himwich_EtAl2022.Celestial_Operator_Product_Expansions_infinity_Symmetry_All_Spins}, it has been shown that the soft currents exhibit the so-called $w_{1+\infty}$ symmetry~\cite{popeLecturesAlgebrasGravity1991, Pope_EtAl1990.complete_structure},
where the infinite tower of soft symmetries can be organized into a single symmetry based on the wedge algebra of $w_{1+\infty}$. 
Along this direction, it has been pointed out in~\cite{Mago_EtAl2021.Deformed_infinity_Algebras_Celestial_CFT} that one can consider a more general setup, which involves non-minimal couplings in the bulk on the operator product expansions (OPEs) of soft currents in the Celestial Conformal Field Theory (CCFT).
The corresponding algebra of the currents provides a deformation of $w_{1+\infty}$ algebra, that we call $\widetilde W_{1+\infty}$ algebra.
Although $\widetilde W_{1+\infty}$ algebra bears semblance to $W_{1+\infty}$ algebra~\cite{Pope_EtAl1990.complete_structure}, it shows several specific properties that we briefly review in Section~\ref{sect-W-alg-relations}.

A purpose of this paper is to obtain a deeper understanding of algebraic aspects of $\widetilde W_{1+\infty}$ algebra.
First, we derive the OPEs for the generating currents of $\widetilde W_{1+\infty}$ algebra by generalizing the approaches used for $W_{1+\infty}$ algebra. 
This results in Theorem~\ref{th-OPE-W} in Section~\ref{OPE-paragrpah}. 
Secondly, we explore further generalization of $\widetilde W_{1+\infty}$ algebra. 
In particular, we introduce an $\mathcal{N}=2 $ supersymmetric extention of $\widetilde W_{1+\infty}$ algebra in Section~\ref{super-w-section}, and we identify a BRST operator (Theorem~\ref{th-brst-operator}) in Section~\ref{section-topological}.  
Finally, based on the OPEs for $\mathcal{N}=2$ $\widetilde W_{\infty}$ algebra together with the BRST operator, we construct the generators of the topologically twisted version $\widetilde W_{\infty}$ algebra (Theorem~\ref{th-new-generators}) in Section~\ref{super-w-section}. 
In Section \ref{conclusion-section}, we conclude by discussing potential issues and future directions.

\section{Background: soft current OPEs in CCFT} 
Let us discuss the basics of the soft current in CCFT. We follow the notation of \cite{Jiang2022.Holographic_Chiral_Algebra_Supersymmetry_Infinite_Ward_Identities_EFTs}. 
\begin{definition}[Soft current]
Let $\mathcal{O}_{h,\bar{h}}$ be a positive-helicity conformal primary celestial operator with conformal spin $s \in \frac{1}{2}\mathbb{Z}_{>0}$ and conformal dimensions $(h,\bar{h}) =  \left(\frac{k+s}{2}, \frac{k-s}{2}  \right)$, where we denote the scaling dimension by $k=s,s-1,s-2,\dots$. 	The soft current is defined by
\begin{equation}\label{soft-current} 
	H^{k,s} (z,\bar z) \coloneqq 
	\lim_{\Delta \to k} (\Delta - k) \mathcal{O}_{h, \bar h} (z,\bar z)
=	\lim_{\epsilon \to 0} \epsilon \mathcal{O}_{\frac{k+s}{2}, \frac{k-s}{2} } (z,\bar z), 
\end{equation}
where $(z, \bar z)$ is the codimension-two celestial sphere coordinate. 
\end{definition}

We start from the OPEs for these soft currents from a three-point interaction~\cite{Himwich_EtAl2022.Celestial_Operator_Product_Expansions_infinity_Symmetry_All_Spins}: 
 \begin{equation}\label{OO}
 	\mathcal{O}_{h_1, \bar h_1} (z, \bar z) 
 		\mathcal{O}_{h_2, \bar h_2} (0,0) 
 		\sim \frac{1}{z} \sum_p C_p (\bar h_1, \bar h_2) \bar z^p 
 		\mathcal{O} _{h_1+h_2-1, \ \bar h_1 + \bar h_2 +p}(0,0),
 \end{equation}
 where $p \coloneqq d_V-4$ with $d_V$ being a bulk dimension of the three-point interaction coupling particles, $  C_p (\bar h_1, \bar h_2)$ denotes the OPE coefficients and ``$\sim$'' means equality modulo expressions regular as $(z, \bar z) \to (0,0)$.  
Taking the limit $\epsilon \to 0$ as in \eqref{soft-current}, one obtain the OPE for soft currents shown in \cite{Himwich_EtAl2022.Celestial_Operator_Product_Expansions_infinity_Symmetry_All_Spins}: 
\begin{multline}
		H^{k_1, s_1} \left(z, \bar z \right)  H^{k_2, s_2} \left(w, \bar w \right) \sim -\sum_p \frac{\kappa_{s_1,s_2,-s_I}  }{2} \frac{1}{z-w}
    \sum_{\alpha=0}^{\infty} \frac{\left(\bar z- \bar w\right)^{\alpha+p}}{\alpha!} 
    \\
    \times 
    \begin{pmatrix}-2\bar{h}_1-2\bar{h}_2-2p-\alpha\\ -2\bar{h}_2-p\end{pmatrix} \bar{\partial}^\alpha H^{k_1+k_2+p-1,s_1+s_2-p-1}
     {(w,\bar w)}, 
      \label{soft-current-OPEallspins}
\end{multline}
where $\kappa_{s_1,s_2,-s_I}$ is the coupling constant of the relevant three-point amplitude, $ \bar{\partial} \coloneqq \frac{\partial}{\partial \bar w} $  and 
 $s_I = s_1+s_2-p-1$.

We use the following antiholomorphic mode expansion of the celestial currents: 
\begin{equation}\label{mode-expansion} 
	H^{k,s} (z,\bar z) = \sum_{ n = \bar h }^{-\bar h }
			H_n^{k,s} (z) \bar z ^{-n - \bar h}
			=
			 \sum_{ n =  \frac{k-s}{2}  }^{ \frac{s-k}{2}  } 
			 \frac{ H_n^{k,s} (z) }{ \bar z^{n + \frac{k-s}{2} } }, 
\end{equation} 
where each $H^{k,s}_n(z) $ is a 2D symmetry-generating conserved current whose Ward identity provides a soft theorem.\footnote{Note that \eqref{mode-expansion} holds in the MHV sector. Beyond it the polynomial becomes infinite and we should change the lower limit to $-\infty$ \cite{Mago_EtAl2021.Deformed_infinity_Algebras_Celestial_CFT}.} 
Note that we have a $z$-dependence (but not $\bar z$) for the Fourier modes of the celestial currents. 

The OPE and the commutation relations constitute an equivalent description of the corresponding algebra.
We define the following commutator~\cite{Mago_EtAl2021.Deformed_infinity_Algebras_Celestial_CFT}. 
\begin{definition}[2D-commutator]\label{definition-2D-commutator}
	\begin{align}
    \label{eq:2Dcommutator}
 \left[ H_{m}^{k_1, s_1}, H_{n}^{k_2, s_2} \right](w) \coloneqq 
 {\oint}_w \frac{\diff z}{2\pi \I }  
{\oint}_\epsilon  \frac{\diff \bar z}{2\pi \I } \bar z^{m+ \bar h_1-1}
{\oint}_\epsilon \frac{\diff \bar w}{2\pi \I } \bar w^{n+ \bar h_2-1}
H^{k_1, s_1}\left(z, \bar z \right)   H^{k_2, s_2}\left(w, \bar w \right), 
\end{align}   
where $\oint_w$ and $\oint_\epsilon$ denotes the integration over the contours around $w$ and $0$ respectively.\footnote{See \cite[App.~C]{Mago_EtAl2021.Deformed_infinity_Algebras_Celestial_CFT} for more detailed discussion.} 
\end{definition}

\begin{proposition}[Algebra of celestial currents \cite{Mago_EtAl2021.Deformed_infinity_Algebras_Celestial_CFT}] 
 The following algebraic relation holds for the celestial current modes defined in \eqref{mode-expansion},
\begin{multline}
	   \label{soft-current-commutator}
            \left[ H_{m}^{k_1, s_1}, H_{n}^{k_2, s_2} \right] (w) 
            =  - \sum_{p \, = \, \Max (s_1+s_2-3,0)}^{\Max (s_1+s_2+1,0)} 
            \frac{\kappa_{s_1,s_2,-s_I}}{2} \sum_{x=0}^p 
            \left[
            (-1)^{p-x}\binom{p}{x}\right.  
            \\ 
         \times    \left.  \binom{m+n-\bar h_1- \bar h_2-p}{m-\bar h_1-p+x}  \binom{-m-n-\bar h_1-\bar h_2-p}{-m-\bar h_1-x}  \right]
            H^{k_1 + k_2 + p - 1, s_1 + s_2 -p-1}_{m+n} (w).  
\end{multline}
For the gauge theory currents, we replace $ \frac{1}{2} \kappa_{s_1,s_2,-s_I} \mapsto \I f\indices{^{ab}_c} $ in the formula \eqref{soft-current-commutator}, where $  f\indices{^{ab}_c} $ is the structure constant, associated with the gauge symmetry.  
\end{proposition}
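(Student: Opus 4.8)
The plan is to feed the soft-current OPE \eqref{soft-current-OPEallspins} into the 2D-commutator of Definition~\ref{definition-2D-commutator} and carry out the three contour integrals successively. The first step is the $z$-integral $\oint_w \frac{\diff z}{2\pi\I}$: the OPE \eqref{soft-current-OPEallspins} contains only the simple pole $\tfrac{1}{z-w}$, no higher-order poles, and nothing else on its right-hand side depends on $z$, so this integral just extracts the residue and contributes the factor $1$. This is the step that encodes the bracket structure of the commutator --- the contour about $w$ being the difference of the two radially ordered configurations $|z|>|w|$ and $|z|<|w|$ --- and it makes manifest that no central term arises in the holomorphic sector. What remains is
\begin{multline*}
\left[ H_{m}^{k_1, s_1}, H_{n}^{k_2, s_2} \right](w) = -\sum_p \frac{\kappa_{s_1,s_2,-s_I}}{2}\sum_{\alpha=0}^{\infty} \frac{1}{\alpha!}\binom{-2\bar h_1-2\bar h_2-2p-\alpha}{-2\bar h_2-p} \\
\times \oint_\epsilon\frac{\diff\bar z}{2\pi \I}\,\bar z^{m+\bar h_1-1}\oint_\epsilon\frac{\diff\bar w}{2\pi \I}\,\bar w^{n+\bar h_2-1}\,(\bar z-\bar w)^{\alpha+p}\,\bar\partial^{\,\alpha} H^{k_1+k_2+p-1,\,s_1+s_2-p-1}(w,\bar w).
\end{multline*}

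Next I would evaluate the two antiholomorphic integrals about the origin. Expanding $(\bar z-\bar w)^{\alpha+p}$ by the binomial theorem and the intermediate current $H^{k_1+k_2+p-1,\,s_1+s_2-p-1}(w,\bar w)$ in its mode expansion \eqref{mode-expansion}, both integrals reduce to residue extractions: the $\bar z$-integral localizes the binomial-expansion index to the value forced by $m$ and $\bar h_1$, while the $\bar w$-integral --- after letting $\bar\partial^{\,\alpha}$ act on each mode, which produces a falling factorial (Pochhammer-type product) in $\alpha$, and using that the intermediate current has antiholomorphic weight $\bar h_1+\bar h_2+p$ --- localizes the intermediate mode to exactly $m+n$, reproducing $H^{k_1+k_2+p-1,\,s_1+s_2-p-1}_{m+n}(w)$ on the right-hand side. (In the MHV sector the expansion \eqref{mode-expansion} is a finite Laurent polynomial, so the two small contours about $0$ introduce no ordering ambiguity; beyond the MHV sector one works with the truncation noted in the footnote to \eqref{mode-expansion}.) The output at this point is a single sum over the OPE index $\alpha$ of a product of one binomial coefficient, one falling factorial, and a sign.

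The final step --- and the main obstacle --- is the combinatorial resummation that turns this $\alpha$-series into the closed form of \eqref{soft-current-commutator}. I would first split the binomial $\binom{\alpha+p}{\,\cdot\,}$ arising from the expansion of $(\bar z-\bar w)^{\alpha+p}$ via the Chu--Vandermonde identity $\binom{\alpha+p}{j}=\sum_{x=0}^{p}\binom{p}{x}\binom{\alpha}{j-x}$; this both produces the prefactor $\sum_{x=0}^{p}(-1)^{p-x}\binom{p}{x}$ and reduces the remaining $\alpha$-dependence to a single hypergeometric-type sum, which is then collapsed by a summation identity of Vandermonde/Pfaff--Saalsch\"utz type into the product $\binom{m+n-\bar h_1-\bar h_2-p}{m-\bar h_1-p+x}\binom{-m-n-\bar h_1-\bar h_2-p}{-m-\bar h_1-x}$. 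The delicate part is tracking all the signs and converting falling factorials into the (possibly upper-negative) binomial coefficients via $\binom{-a}{b}=(-1)^b\binom{a+b-1}{b}$. Finally, the range of $p$ in \eqref{soft-current-commutator} is fixed by the admissibility of the intermediate spin $s_I=s_1+s_2-p-1$ (equivalently, the non-vanishing of $\kappa_{s_1,s_2,-s_I}$) together with the vanishing of the above binomial coefficients outside $\Max(s_1+s_2-3,0)\le p\le\Max(s_1+s_2+1,0)$; cf.\ \cite{Himwich_EtAl2022.Celestial_Operator_Product_Expansions_infinity_Symmetry_All_Spins, Mago_EtAl2021.Deformed_infinity_Algebras_Celestial_CFT}.

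For the gauge-theory statement, the colour-stripped soft-current OPE has exactly the same $(z,\bar z)$-kinematics as \eqref{soft-current-OPEallspins} with $\tfrac12\kappa_{s_1,s_2,-s_I}$ replaced by $\I f\indices{^{ab}_c}$; since the structure constant carries no spacetime dependence, the computation above applies verbatim and yields the stated substitution.
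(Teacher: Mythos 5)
The paper states this proposition without proof, importing it from \cite{Mago_EtAl2021.Deformed_infinity_Algebras_Celestial_CFT}, so there is no in-text argument to compare against; your proposal is the natural derivation and is essentially the same machinery the paper itself uses (in the reverse direction) to prove Theorem~\ref{th-OPE-W}: insert the OPE \eqref{soft-current-OPEallspins} into the triple contour integral of Definition~\ref{definition-2D-commutator}, extract the simple pole in $z$, expand $(\bar z-\bar w)^{\alpha+p}$ and the intermediate current in modes, and let the two $\bar z,\bar w$ residues fix the binomial index to $-m-\bar h_1$ and the intermediate mode to $m+n$. Your final resummation step, though only sketched, does close as claimed: writing $[-(m+n)-\bar h_1-\bar h_2-p]_\alpha/\alpha!$ as $\binom{-m-n-\bar h_1-\bar h_2-p}{\alpha}$, applying Chu--Vandermonde to $\binom{\alpha+p}{-m-\bar h_1}$, using the subset-of-a-subset identity and the alternating Vandermonde sum $\sum_\beta(-1)^\beta\binom{B}{\beta}\binom{A-\beta}{C}=\binom{A-B}{C-B}$, and finally the symmetry $\binom{N}{K}=\binom{N}{N-K}$ reproduces exactly the two binomials and the sign $(-1)^{p-x}$ in \eqref{soft-current-commutator}.
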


\section{$\widetilde W_{1+\infty}$ algebra and its structure}
\subsection{The commutation relations}\label{sect-W-alg-relations}
In the absence of higher-derivative corrections (fixed $p$ in the expansion \eqref{soft-current-commutator}), the soft currents for gravity and gauge theory obey the OPE relations of the generating currents of the $w_{1+\infty}$ algebra \cite{Guevara_EtAl2021.Holographic_Symmetry_Algebras_Gauge_Theory_Gravity,Strominger2021.infinity_Celestial_Sphere,Himwich_EtAl2022.Celestial_Operator_Product_Expansions_infinity_Symmetry_All_Spins}.
Its supersymmetric extension has been also discussed~\cite{Himwich_EtAl2022.Celestial_Operator_Product_Expansions_infinity_Symmetry_All_Spins, Jiang2022.Holographic_Chiral_Algebra_Supersymmetry_Infinite_Ward_Identities_EFTs, Ahn2022.Supersymmetric_infinity_Symmetry_Celestial_Conformal_Field_Theory}.  

Let us consider the effect of non-minimal couplings in the bulk on the OPEs of soft currents \cite{Mago_EtAl2021.Deformed_infinity_Algebras_Celestial_CFT}. We use the light transform of the celestial operator~\cite{Kravchuk_Simmons-Duffin2018.Lightray_operators_conformal_field_theory}: 
\begin{align}
    \label{lighttransform}
    {\bf \bar{L}}
    \left[\mathcal{O}_{h, \bar h } (z,\bar z) \right] =      
     \int_{\mathbb{R}} 
      \frac{\diff \bar{y} }{ 2 \pi \I }
     \frac{1}{(\bar z-\bar{y})^{2-2\bar h}} \mathcal{O}_{h,\bar h} \left(z, \bar{y}\right)  
\end{align}
to redefine the currents.
\begin{definition}[$\widetilde{W}_{1+\infty}$ currents]
 We define
	\begin{align}
    \label{w-current-def}
    \widetilde{W}^{q,s} (z, \bar z) \coloneqq \Gamma\left(2q\right) 
     {\bf {\bar{L}}}\left[ H^{s+2(1-q),s} (z,\bar z) \right],
\end{align}
where $q = 1, \frac{3}{2}, 2, \dots$ for particles with helicity $s>0$ and $\Gamma(x)$ is the Gamma function.  
\end{definition}

This implies the following mode expansion
\begin{align}\label{w-mode-expansion} 
     \widetilde{W}^{q,s}(z,\bar{z}) =  \sum_{m= \bar{h} }^{ - \bar{h} }  \widetilde W^{q,s}_{m}
     \left(z\right) \bar z^{-m- \bar {h} }
\end{align}
and  
\begin{align}\label{w-fourier-mode-def} 
 \widetilde{W}^{q,s}_{m} \left(z\right)= \left(-m+q-1\right)! \left(m+q-1\right)!H_{m}^{s+2(1-q),s} \left(z\right).
\end{align}
Using these  
currents, we can simplify the commutator \eqref{soft-current-commutator} and obtain: 
\begin{proposition}[$\widetilde{W}_{1+\infty}$ algebra~\cite{Mago_EtAl2021.Deformed_infinity_Algebras_Celestial_CFT}]
 $\widetilde{W}_{1+\infty}$ current modes obey the following algebraic relation,
	\begin{equation}  \label{Wgeneral} 
	    \left\lbrack  \widetilde{W}_{m}^{q_1,s_1} ,\,  \widetilde{W}_{n}^{q_2,s_2} \right\rbrack =
	    -\sum_{p \, = \, \Max(s_1+s_2-3,0)}^{\Max(s_1+s_2+1,0)} \frac{\kappa_{s_1,s_2,-s_I}}{2} N\left(q_1, q_2, m, n, p\right) 
                \widetilde{W}_{m + n}^{q_1+q_2-p-1, s_1+s_2-p-1}, 
\end{equation}
where $q_1,q_2 = 1, \frac{3}{2}, 2, \dots$ for particles with helicities $s_1,s_2>0 $, $m,n \in \mathbb{Z}$ and
\begin{multline}
	\label{N-def} 
	N\left(q_1, q_2, m, n, p\right)  \coloneqq 
	 \sum_{x=0}^p (-1)^{p-x} \binom{p}{x}\,  \lbrack m + q_1 - 1 \rbrack_{p-x} \  \lbrack -m + q_1 - 1 \rbrack_{x} 
	 \\ \times 
           \lbrack n + q_2 - 1 \rbrack_{x}  \ \lbrack -n + q_2 - 1 \rbrack_{p-x},
\end{multline}
with ascending and descending Pochhammer symbols 
\begin{subequations} 
\begin{align}
 (a)_n \coloneqq a(a+1) \dots (a+n-1) & = \frac{(a+n-1)!}{(a-1)!}, \label{pochhammer1} \\
 [a]_n \coloneqq a(a-1) \dots (a-n+1) & = \frac{a!}{(a-n)!}. \label{pochhammer2}
\end{align}
\end{subequations}
Note that the coefficient $N\left(q_1, q_2, m, n, p\right)$ vanishes in \eqref{Wgeneral} for specific values of $p$.
\end{proposition}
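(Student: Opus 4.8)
The plan is to derive \eqref{Wgeneral} from the celestial-current commutator \eqref{soft-current-commutator} by a pure change of basis, so the argument is a substitution followed by a combinatorial simplification. First I would specialise the scaling dimension to the value $k_i = s_i + 2(1-q_i)$ that enters the definition \eqref{w-current-def}; this gives $\bar h_i = \tfrac12(k_i - s_i) = 1 - q_i$, so that the quantities appearing in \eqref{N-def} are precisely $m + q_i - 1 = m - \bar h_i$ and $-m + q_i - 1 = -m - \bar h_i$, and similarly for $n$. I would then invert \eqref{w-fourier-mode-def}, writing $H^{s+2(1-q),s}_m = \widetilde W^{q,s}_m / \bigl[(-m+q-1)!\,(m+q-1)!\bigr]$, and substitute this on both sides of \eqref{soft-current-commutator}.

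Next I would track the labels of the current produced on the right-hand side: with $k_I = k_1 + k_2 + p - 1$ and $s_I = s_1 + s_2 - p - 1$ one finds $\bar h_I := \tfrac12(k_I - s_I) = \bar h_1 + \bar h_2 + p$, so the $\widetilde W$-label is $q_I = 1 - \bar h_I = q_1 + q_2 - p - 1$, exactly as in \eqref{Wgeneral}. Moving the factorial prefactors of $\widetilde W^{q_1,s_1}_m$ and $\widetilde W^{q_2,s_2}_n$ across the equality and those of $\widetilde W^{q_I,s_I}_{m+n}$ to the other side, the entire statement collapses to the single combinatorial identity
\begin{multline*}
  N(q_1,q_2,m,n,p) = (-m+q_1-1)!\,(m+q_1-1)!\,(-n+q_2-1)!\,(n+q_2-1)! \\
  \times \sum_{x=0}^{p} (-1)^{p-x}\binom{p}{x}\,
  \frac{\binom{m+n-\bar h_1-\bar h_2-p}{\,m-\bar h_1-p+x\,}\,\binom{-m-n-\bar h_1-\bar h_2-p}{\,-m-\bar h_1-x\,}}
       {\bigl(-(m+n)-\bar h_I\bigr)!\,\bigl((m+n)-\bar h_I\bigr)!}\,.
\end{multline*}

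Then I would prove this identity termwise in $x$. Writing $A = m - \bar h_1$, $B = -m - \bar h_1$, $C = n - \bar h_2$, $D = -n - \bar h_2$, one has $(m+n)-\bar h_I = A+C-p$ and $-(m+n)-\bar h_I = B+D-p$; expanding the two numerator binomials by $\binom{a+b}{a} = (a+b)!/(a!\,b!)$ cancels the denominator factors $(A+C-p)!$ and $(B+D-p)!$, leaving $\tfrac{A!}{(A-p+x)!}\cdot\tfrac{B!}{(B-x)!}\cdot\tfrac{C!}{(C-x)!}\cdot\tfrac{D!}{(D-p+x)!}$. By the descending-Pochhammer definition \eqref{pochhammer2} this is $[A]_{p-x}[B]_{x}[C]_{x}[D]_{p-x} = [m+q_1-1]_{p-x}\,[-m+q_1-1]_{x}\,[n+q_2-1]_{x}\,[-n+q_2-1]_{p-x}$, which is exactly the summand of \eqref{N-def}; and since the range of $p$ in \eqref{soft-current-commutator} is the one already written in \eqref{Wgeneral}, the outer sum carries over verbatim.

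The step to be careful about — and the only real obstacle to full rigour — is the treatment of modes outside the wedge (when $|m| > q-1$, or beyond the MHV sector): there the prefactors $(\pm m + q - 1)!$ are singular while $\widetilde W^{q,s}_m$ vanishes, so one should read every factorial as a $\Gamma$-function and regard the displayed relation as an identity of rational functions in $m$ and $n$, valid on the wedge and extended by analytic continuation. It is also worth noting that the Pochhammer factors in \eqref{N-def} may vanish individually, so $N(q_1,q_2,m,n,p)$ can be zero for some $p$ in the stated range — which accounts for the closing remark of the proposition.
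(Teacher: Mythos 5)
Your proposal is correct and follows exactly the route the paper intends (and the cited reference uses): substituting the mode redefinition \eqref{w-fourier-mode-def} with $\bar h_i = 1-q_i$ into the soft-current commutator \eqref{soft-current-commutator}, identifying $q_I = q_1+q_2-p-1$, and reducing the binomial coefficients termwise to the Pochhammer products of \eqref{N-def}; the paper itself states this proposition without proof, merely noting it follows by "simplifying" \eqref{soft-current-commutator} with the new currents. Your closing caveat about modes at the wedge boundary and vanishing Pochhammer factors is an appropriate (and correctly handled) refinement rather than a gap.
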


We call this $\widetilde{W}_{1+\infty}$ algebra.\footnote{The currents \eqref{w-current-def} are referred to as \textit{$W_{1+\infty}$-like} currents in~\cite{Mago_EtAl2021.Deformed_infinity_Algebras_Celestial_CFT}. 
In this regard, our $\widetilde W^{q,s}$ corresponds to their $W^{q,s}$.}
In the absense of higher-derivative corrections (fixed $p$), this algebra is reduced to $w_{1+\infty}$ algebra~\cite{Strominger2021.infinity_Celestial_Sphere,Himwich_EtAl2022.Celestial_Operator_Product_Expansions_infinity_Symmetry_All_Spins,Ahn2022.Supersymmetric_infinity_Symmetry_Celestial_Conformal_Field_Theory}.
The $\widetilde{W}_{1+\infty}$ algebra \eqref{Wgeneral} bears semblance to $W_{1+\infty}$ algebra, which is also given by a deformation of $w_{1+\infty}$ \cite{Pope_EtAl1990.complete_structure}.  
The key difference is that the truncation of $\widetilde{W}_{1+\infty}$ algebra \eqref{Wgeneral} occurs naturally, due to the absence of massless higher spin fields, while it is provided by non-trivial zeroes of the hypergeometric function $\phi^{q_1, q_2}_{2r}$ for $W_{1+\infty}$~\cite{Pope_EtAl1990.complete_structure}.

We have the coefficient function $N\left(q_1, q_2, m, n, p\right)$ for $\widetilde{W}_{1+\infty}$ algebra as in $W_{1+\infty}$ algebra.
For further calculations, we will use another representation:	
\begin{lemma}[\cite{Pope_EtAl1990.complete_structure}] 
 The coefficient function $N\left(q_1, q_2, m, n, p\right)$ is rewritten in the following form,
	\begin{multline}
	\label{N-lemma-eq} 
		N\left(q_1, q_2, m, n, p\right) = 
		 \sum_{x=0}^p (-1)^{p-x} \binom{p}{x}  \
		 [2q_2-2-x]_{p-x} \  
		 (2q_1 - 1 - p)_x  \\ \times 
		 [m+q_1 -1 ]_{p-x} \
		 [n+q_2-1]_x. 
\end{multline}
\end{lemma}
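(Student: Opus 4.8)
The plan is to identify both \eqref{N-def} and the asserted form \eqref{N-lemma-eq}, for each fixed non-negative integer $p$, with a single terminating hypergeometric series ${}_3F_2$ evaluated at $1$, multiplied by an explicit ratio of Pochhammer symbols, and then to pass from one to the other by one of the classical two-term transformations of terminating ${}_3F_2(1)$'s (a Thomae--Sheppard relation). Since \eqref{N-def} and \eqref{N-lemma-eq} are both manifestly polynomial in $q_1,q_2,m,n$, it is enough to establish the identity for generic values of these parameters, so that no hypergeometric denominator vanishes en route.

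First I would perform the reduction. Set $a\coloneqq m+q_1-1$ and $b\coloneqq -n+q_2-1$, so the four factorial factors in \eqref{N-def} read $[a]_{p-x}$, $[2q_1-2-a]_x$, $[2q_2-2-b]_x$ and $[b]_{p-x}$. Inserting the elementary identities $[c]_k=(-1)^k(-c)_k$, $[c]_{p-x}=[c]_p/(c-p+1)_x$ and $\binom{p}{x}=(-1)^x(-p)_x/x!$, all the factors of $(-1)^x$ cancel and \eqref{N-def} collapses to
\begin{equation*}
   N(q_1,q_2,m,n,p)=(-1)^p\,[a]_p\,[b]_p\,{}_3F_2\bigl(-p,\,a-2q_1+2,\,b-2q_2+2;\,a-p+1,\,b-p+1;\,1\bigr).
\end{equation*}
Feeding the same identities into \eqref{N-lemma-eq}, together with $[2q_2-2-x]_{p-x}=[2q_2-2]_p/[2q_2-2]_x$, turns its right-hand side into
\begin{equation*}
   (-1)^p\,[2q_2-2]_p\,[a]_p\,{}_3F_2\bigl(-p,\,2q_1-1-p,\,b-2q_2+2;\,2-2q_2,\,a-p+1;\,1\bigr).
\end{equation*}
Thus the claim reduces to the equality of these two ${}_3F_2$'s, weighted by $[b]_p$ and by $[2q_2-2]_p$ respectively; observe that both share the upper entries $-p$ and $b-2q_2+2$ and the lower entry $a-p+1$.

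Next I would apply the terminating transformation
\begin{equation*}
   {}_3F_2(\alpha,\,\beta,\,-p;\,\delta,\,\varepsilon;\,1)=\frac{(\varepsilon-\alpha)_p}{(\varepsilon)_p}\,{}_3F_2\bigl(\alpha,\,\delta-\beta,\,-p;\,\delta,\,\alpha-\varepsilon-p+1;\,1\bigr),
\end{equation*}
which keeps $\alpha$ on top and $\delta$ on the bottom. Applied to the second series with $\alpha=b-2q_2+2$, $\beta=2q_1-1-p$, $\delta=a-p+1$ and $\varepsilon=2-2q_2$, it gives $\delta-\beta=a-2q_1+2$, $\alpha-\varepsilon-p+1=b-p+1$ and prefactor $(\varepsilon-\alpha)_p/(\varepsilon)_p=(-b)_p/(2-2q_2)_p$, so the transformed series is precisely the first one. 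The two prefactors then match: from $[2q_2-2]_p=(-1)^p(2-2q_2)_p$ and $[b]_p=(-1)^p(-b)_p$ one has $(-1)^p[2q_2-2]_p\,[a]_p\cdot(-b)_p/(2-2q_2)_p=[a]_p(-b)_p=(-1)^p[a]_p[b]_p$, which is the weight in front of the first series. Hence the two expressions for $N$ agree. As consistency checks one can verify $p=0,1,2$ directly and observe the symmetry $N(q_1,q_2,m,n,p)=(-1)^pN(q_2,q_1,n,m,p)$, which follows from \eqref{N-def} by the substitution $x\mapsto p-x$.

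The step I expect to be the main obstacle is the second one: the terminating ${}_3F_2(1)$ obeys a whole web of Thomae-type relations, and one must pick out the single one that leaves the three shared parameters untouched while turning $2q_1-1-p$ back into $a-2q_1+2=m-q_1+1$. In particular the two series are \emph{not} Saalsch\"utzian for generic $p$ (the excess of the lower parameters over the upper ones is $2q_1+2q_2-p-2$, not $1$), so the usual closed-form summation theorems -- Pfaff--Saalsch\"utz, Dixon, Watson -- give nothing here. The other point requiring care is sign bookkeeping: the $(-1)^x$ produced by the Pochhammer conversions in the reduction and the $(-1)^p$ in the final prefactor comparison must cancel exactly, since a single stray sign would leave the two sides differing by $(-1)^p$. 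An alternative, closer to the original treatment in \cite{Pope_EtAl1990.complete_structure}, is to realise the modes $\widetilde W^{q,s}_m$ as differential operators on the line and to read \eqref{N-def} and \eqref{N-lemma-eq} off from two inequivalent orderings of one operator product; there the labour shifts to matching the normalisation \eqref{w-fourier-mode-def}.
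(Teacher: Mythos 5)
Your proof is correct; note that the paper itself offers no argument for this lemma (it is quoted from \cite{Pope_EtAl1990.complete_structure} without proof), so what you supply is a genuinely self-contained derivation rather than a variant of an internal one. I checked the three ingredients. (i) With $a=m+q_1-1$, $b=-n+q_2-1$, the conversions $[c]_k=(-1)^k(-c)_k$, $[c]_{p-x}=[c]_p/(c-p+1)_x$, $[2q_2-2-x]_{p-x}=[2q_2-2]_p/[2q_2-2]_x$ and $\binom{p}{x}=(-1)^x(-p)_x/x!$ do reduce \eqref{N-def} to $(-1)^p[a]_p[b]_p\,{}_3F_2\!\left(-p,\,a-2q_1+2,\,b-2q_2+2;\,a-p+1,\,b-p+1;\,1\right)$ and \eqref{N-lemma-eq} to $(-1)^p[2q_2-2]_p[a]_p\,{}_3F_2\!\left(-p,\,2q_1-1-p,\,b-2q_2+2;\,2-2q_2,\,a-p+1;\,1\right)$, with every stray $(-1)^x$ cancelling so that only the overall $(-1)^p$ survives. (ii) The two-term relation you invoke is indeed the classical Sheppard--Thomae transformation of a terminating ${}_3F_2(1)$ (I verified it independently at $p=1,2$; it is standard, e.g.\ in Bailey's tract), and with $\alpha=b-2q_2+2$, $\beta=2q_1-1-p$, $\delta=a-p+1$, $\varepsilon=2-2q_2$ it maps the second series to the first with prefactor $(-b)_p/(2-2q_2)_p$. (iii) The prefactors then match because $(-1)^p[2q_2-2]_p/(2-2q_2)_p=1$ and $[a]_p(-b)_p=(-1)^p[a]_p[b]_p$. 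Your genericity remark is exactly what legitimizes the divisions by $(a-p+1)_x$, $(b-p+1)_x$ and $[2q_2-2]_x$, since both sides are polynomials in $q_1,q_2,m,n$; and a direct $p=1$ check gives $N=-2\left[m(q_2-1)-n(q_1-1)\right]$ from both \eqref{N-def} and \eqref{N-lemma-eq}, consistent with your reduction. The alternative you sketch at the end (reading both forms off an operator realization of the modes) is closer in spirit to the cited reference, but the hypergeometric route you actually carry out is complete and correct as it stands.
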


\subsection{The OPEs for $\widetilde{W}_{1+\infty}$ algebra} \label{OPE-paragrpah} 
In this Section, we provide the OPE for $\widetilde{W}_{1+\infty}$ algebra.
For this purpose, we apply the procedure described in \cite{Pope_EtAl1990.infinity_Racahwigner_Algebra} originally used for $W_{1+\infty}$ algebra.
In the context of CCFT, we have to take into account a $\bar{z}$-dependence, in contrast to the original situation discussing the chiral currents~\cite{Pope_EtAl1990.infinity_Racahwigner_Algebra}.
We apply the following steps:
\begin{enumerate}
	\item We define a new function 
	\begin{equation}
			M \left(q_1, q_2, m, n, p\right) \coloneqq 
		 \sum_{x=0}^p 
		 (-1)^{-x} 
		 \binom{p}{x}  \
		 [2q_2-2-x]_{p-x} \  
		 (2q_1 - 1 - p)_x  \
		 m^{p-x} \
		 n^x.  
	\end{equation}
	This function generates the terms of $N \left(q_1, q_2, m, n, p\right)$ of total degree $p$. 
	\item We replace the coefficient in the $\widetilde W$-commutator \eqref{Wgeneral} as follows,
		\begin{equation}
			N \left(q_1, q_2, m, n, p\right)   
			\mapsto 
			M \left(q_1, q_2,  \frac{\partial}{\partial \bar z} ,  \frac{\partial}{\partial \bar w}, p\right),
		\end{equation}
	      from which we can extract the OPE from the commutation relation of the generators.
	\item We add the factors, $(-1)$, $\frac{1}{\bar z - \bar w}$ and $\frac{1}{z - w}$ to capture the corresponding poles and comply with the Definition~\ref{definition-2D-commutator}. 
\end{enumerate} 

Here is the main result of this paper.
\begin{theorem}[OPE for  $\widetilde{W}_{1+\infty}$-currents]
\label{th-OPE-W} 
	The OPE for the $\widetilde{W}$-currents \eqref{w-current-def} has the following form:
	\begin{multline}\label{th-OPE-W-eq} 
	\widetilde W^{q_1, s_1} (z,\bar z) 
	\widetilde  W^{q_2, s_2} (w, \bar w)
	   \sim   
	   \sum_p \frac{\kappa_{s_1,s_2,-s_I}  }{2} \frac{1}{z-w}    
	   \sum_{x=0}^p 
	   (-1)^{-x}
	    \binom{p}{x} 
	   (2q_1 -1 - p)_x 
	   \\  
	   \times 
	   [2q_2 - 2 - x]_{p-x}  \ 
	   \bar{\partial}^{p-x}_{\bar z}  \ 
	     \bar{\partial}^{x}_{\bar w} 
	     \left( 
	     	\frac{\widetilde  W^{q_1+q_2-1-p, \ s_1+s_2-p-1} (w, \bar w)}{\bar z - \bar w }
	     \right).  
\end{multline}
\end{theorem}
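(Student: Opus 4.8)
The plan is to prove \eqref{th-OPE-W-eq} by verifying that it is \emph{equivalent} to the already-established commutation relation \eqref{Wgeneral}: substituting the right-hand side of \eqref{th-OPE-W-eq} into the 2D-commutator of Definition~\ref{definition-2D-commutator} — with the $\widetilde W$-currents read off through the mode expansion \eqref{w-mode-expansion} in place of the $H$-currents — must reproduce \eqref{Wgeneral} with the coefficient $N$ in the Pope form \eqref{N-lemma-eq}. Since that contour-integral pairing is faithful on the singular part of an OPE, this equivalence is all that is needed. In practice I would organise the argument around the three-step recipe stated before the theorem: (i) isolate the homogeneous top-degree piece of $N(q_1,q_2,m,n,p)$, i.e.\ its part of total degree $p$ in the shifted variables $m+q_1-1$ and $n+q_2-1$, which is precisely $M(q_1,q_2,m,n,p)$; (ii) promote the monomials $m^{p-x}n^{x}$ in $M$ to the differential operators $\bar\partial_{\bar z}^{\,p-x}\bar\partial_{\bar w}^{\,x}$; (iii) reinstate the simple poles $1/(z-w)$ and $1/(\bar z-\bar w)$ together with the overall sign fixed by the contour prescription in \eqref{eq:2Dcommutator}.

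The verification then proceeds in two stages. First, the holomorphic contour $\oint_w \diff z/2\pi\I$ in \eqref{eq:2Dcommutator} merely extracts the coefficient of $1/(z-w)$; this is also why no higher holomorphic pole can appear, the $z$-dependence of the Fourier modes being otherwise inert. Second, one expands the target current $\widetilde W^{q_1+q_2-1-p,\,s_1+s_2-p-1}(w,\bar w)$ in its antiholomorphic modes, pushes the derivatives $\bar\partial_{\bar z}^{\,p-x}$ and $\bar\partial_{\bar w}^{\,x}$ through by the Leibniz rule, and evaluates the two residue integrals $\oint_\epsilon \diff\bar z\,\bar z^{m+\bar h_1-1}$ and $\oint_\epsilon \diff\bar w\,\bar w^{n+\bar h_2-1}$. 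A derivative $\bar\partial_{\bar w}$ acting on a mode power of $\bar w$ yields a falling factorial in the mode index, while the Taylor expansion of $1/(\bar z-\bar w)$ about $\bar z=0$ produces exactly the binomial coefficients of the schematic form $\binom{\,\cdot\,}{\,\cdot\,}$ appearing in \eqref{soft-current-commutator}. Collecting these factors and invoking the binomial (Chu--Vandermonde) and Pochhammer identities familiar from the $W_{1+\infty}$ analysis \cite{Pope_EtAl1990.complete_structure}, one checks that the net polynomial in $m,n$ multiplying $\widetilde W^{q_1+q_2-1-p,\,s_1+s_2-p-1}_{m+n}(w)$ is the \emph{full} $N(q_1,q_2,m,n,p)$ of \eqref{N-lemma-eq} — not merely its top part $M$ — the subleading terms being regenerated automatically by the $\bar h$-dependent shifts built into the mode expansion. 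The $p$-range, the coupling $\kappa_{s_1,s_2,-s_I}$, and the spin/weight shifts $q_1+q_2-1-p$, $s_1+s_2-p-1$ are inherited verbatim from \eqref{Wgeneral}.

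The main obstacle is precisely this bookkeeping: demonstrating that truncating $N$ to $M$ in step (ii) is legitimate because the lower-degree terms of $N$ in \eqref{N-lemma-eq} are recovered from the interplay of the mode-expansion shifts with the binomials arising from $1/(\bar z-\bar w)$. In the original chiral setup of \cite{Pope_EtAl1990.infinity_Racahwigner_Algebra} this is a single-variable exercise; here it involves two independent antiholomorphic insertions $\bar z$ and $\bar w$ and an extra pole at $\bar z=\bar w$ that must be threaded consistently through the contour integrals of \eqref{eq:2Dcommutator}, so the ordering $|\bar z|>|\bar w|$ of the contours has to be maintained throughout and the combinatorics is correspondingly heavier. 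Once the identity between the reconstructed polynomial and $N$ is in hand, matching the prefactors in \eqref{th-OPE-W-eq} against those in \eqref{Wgeneral} is immediate.
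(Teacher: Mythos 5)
Your proposal is correct and follows essentially the same route as the paper: you substitute the claimed OPE into the 2D-commutator of Definition~\ref{definition-2D-commutator}, evaluate the holomorphic residue and the two antiholomorphic residues using the mode expansion \eqref{w-mode-expansion} (derivatives giving falling factorials, the pole $1/(\bar z-\bar w)$ giving the remaining Pochhammer factor), and check that the resulting coefficient is the full $N(q_1,q_2,m,n,p)$ in the form \eqref{N-lemma-eq}, which is precisely the paper's verification, with the $M$-function three-step recipe serving only as motivation for the ansatz. The only caveat is that you leave the final Pochhammer bookkeeping (the analogues of \eqref{th-symbol1}--\eqref{th-symbol2}) as an asserted check rather than carrying it out, but the ingredients you name are exactly the ones the paper uses.
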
 
\begin{proof}
	We substitute the expression \eqref{th-OPE-W-eq} to Definition~\ref{definition-2D-commutator}.
	The expression takes the form
	\begin{equation}\label{W-curr-commutator-2} 
		 \left[ \widetilde  W_{m}^{q_1, s_1}, \widetilde  W_{n}^{q_2, s_2} \right](w) =
		 \oint_w \frac{\diff z}{2 \pi \I } 
{\oint}_\epsilon \frac{\diff \bar w}{2\pi \I } \bar w^{n+ q_2-1}
{\oint}_\epsilon  \frac{\diff \bar z}{2\pi \I } \bar z^{m+ q_1-1} 
\widetilde  W^{q_1, s_1}\left(z, \bar z \right)  \widetilde   W^{q_2, s_2}\left(w, \bar w \right), 
	\end{equation} 
The integral part is 
\begin{multline}\label{theorem-integral-part} 
		{\oint}_\epsilon  \frac{\diff \bar w}{2\pi \I } \bar w^{n+ q_2-1}
	\frac{\partial^x}{\partial \bar w^x}
{\oint}_\epsilon  \frac{\diff \bar z}{2\pi \I } \bar z^{m+ q_1-1} 
	\frac{\partial^{p-x} }{\partial \bar z^{p-x}  }
 \left( 
	     	\frac{\widetilde  W^{q_1+q_2-1-p, \ s_1+s_2-p-1} (w, \bar w)}{\bar z - \bar w }
	     \right)  
	     \\ = (-1)^{p-x} (p-x)!
	     	{\oint}_\epsilon  \frac{\diff \bar w}{2\pi \I } \bar w^{n+ q_2-1}
	\frac{\partial^x}{\partial \bar w^x} 
		\bigg[  \widetilde  W^{q_1+q_2-1-p, \ s_1+s_2-p-1} (w, \bar w) 
 			\\ \hspace{5cm}
 {\oint}_\epsilon  \frac{\diff \bar z}{2\pi \I } \bar z^{m+ q_1-1} 
	  \left(\bar z - \bar w \right)^{-p+x-1} \bigg] 
	   \\ = (-1)^{-1+m+q_1} (p-x)! \frac{ [-p+x-1]_{-m-q_1} }{(-m-q_1)!}
	{\oint}_\epsilon  \frac{\diff \bar w}{2\pi \I } \bar w^{n+ q_2-1} 
	   \\ \times 		\frac{\partial^x}{\partial \bar w^x}  
	   		\bigg[  \widetilde  W^{q_1+q_2-1-p, \ s_1+s_2-p-1} (w, \bar w)  \
	   		\bar w^{-p+x-1+m+q_1} \bigg].  
\end{multline}
We use a Fourier decomposition \eqref{w-mode-expansion}: 
\begin{align}
	 \widetilde  W^{q_1+q_2-1-p, \ s_1+s_2-p-1}(w,\bar{w}) &=  \sum_{j= \bar{h} }^{ - \bar{h} } 
		 \widetilde  W^{q_1+q_2-1-p, \ s_1+s_2-p-1}_j  (w) \ 
      \bar w^{-j- \bar {h} } 
      \notag \\
      &= \sum_j  \widetilde  W^{q_1+q_2-1-p, \ s_1+s_2-p-1}_j  (w) \ 
      \bar w^{-j- q_1 -q_2 +1 +p }, 
      \label{th-w-fourier-decomp} 
\end{align}
with $\bar {h}=  q_1+q_2-1-p$. Then we use an elementary formula 
\begin{equation}\label{th-elementary-formula}
	\frac{\diff^x }{\diff t^x } \left(t^n\right) = 
	\frac{n!}{(n-x)!} t^{n-x}= [n]_x \, t^{n-x}.  
\end{equation} 
for the $x$-th derivative over $\bar w $ in \eqref{theorem-integral-part}:
\begin{multline}
	\frac{\partial^x}{\partial \bar w^x} \bigg[  \widetilde  W^{q_1+q_2-1-p, \ s_1+s_2-p-1}(w,\bar{w}) \  \bar w^{-p+x-1+m+q_1}  \bigg] 
	\\
	 \mathop{=}^{\eqref{th-w-fourier-decomp} } 
	\sum_j	\frac{\partial^x}{\partial \bar w^x} \bigg[ 
	   \widetilde  W^{q_1+q_2-1-p, \ s_1+s_2-p-1}_j  (w) \ 
      \bar w^{ x+m -j  -q_2 } 	\bigg] 
      \\
      \mathop{=}^{\eqref{th-elementary-formula} } 
      	\sum_j	 
      	[x+m-j-q_2]_x \ 
      	 \widetilde  W^{q_1+q_2-1-p, \ s_1+s_2-p-1}_j  (w) \ 
      	 \bar w^{m-j-q_2}.
      	 \label{th-w-derivative}
\end{multline}
We substitute \eqref{th-w-derivative} into \eqref{theorem-integral-part}, and the integral over $\bar w $ provides 
\begin{multline}
	 \sum_j	[x+m-j-q_2]_x \ 
      \widetilde 	 W^{q_1+q_2-1-p, \ s_1+s_2-p-1}_j  (w) 
 \oint_\epsilon \frac{\diff \bar w}{2\pi \I } 
 	\bar w^{m+n - j -1 }
 	\\ 
 	=     \sum_j	[x+m-j-q_2]_x \ 
      	 \widetilde  W^{q_1+q_2-1-p, \ s_1+s_2-p-1}_j  (w)  \delta_{j, \ m+n}
      	 \\
      	= [x-n-q_2]_x 
      \widetilde 	W^{q_1+q_2-1-p, \ s_1+s_2-p-1}_{m+n}(w). 
      	\label{th-integral-over-w-bar} 
\end{multline}
The last integral is 
\begin{equation}\label{th-integral-over-z}
	\oint_w \frac{\diff z}{2\pi \I} \frac{1}{(z-w)} =1. 
\end{equation}
Bringing together \eqref{W-curr-commutator-2}, \eqref{theorem-integral-part}, \eqref{th-integral-over-w-bar}  and \eqref{th-integral-over-z}, we have
\begin{multline}
 \left[ \widetilde  W_{m}^{q_1, s_1},\widetilde   W_{n}^{q_2, s_2} \right](w)= 
	   -\sum_p \frac{\kappa_{s_1,s_2,-s_I}  }{2}    
	   \sum_{x=0}^p 
	    (-1)^{-x}
	    \binom{p}{x} 
	   (2q_1 -1 - p)_x      [2q_2 - 2 - x]_{p-x}  
	   \\ 
	      \times 
	   (-1)^{-1+m+q_1}
	    [x-n-q_2]_x   \
	   \frac{(p-x)!}{(-m-q_1)!} [-p+x-1]_{-m-q_1} 
      	\widetilde  W^{q_1+q_2-1-p, \ s_1+s_2-p-1}_{m+n}(w).  
      	\label{th-together} 
	\end{multline}
Now, using the definitions of the Pochhammer symbols \eqref{pochhammer1}-\eqref{pochhammer2} and their relation
\begin{equation}\label{pochhammer3}
	[a]_n = (-1)^n (-a)_n = (a-n+1)_n, 
\end{equation} 
we can transform:
\begin{subequations}
\begin{equation}\label{th-symbol1} 
		[x-n-q_2]_x = (-1)^x (-x+n+q_2)_x = (1-n-q_2)_x = 
	(-1)^x [n+q_2-1]_x; 
\end{equation}

\begin{align}
	\frac{(p-x)!}{(-m-q_1)!} [ -p+x-1]_{-m-q_1} &= 
	\frac{(p-x)!}{(-m-q_1)!} (-1)^{-m-q_1} (p-x+1)_{-m-q_1} 
	\notag
	\\
	& =(-1)^{-m-q_1+p-x} [m+q_1-1]_{p-x}.
	\label{th-symbol2} 
\end{align}
\end{subequations}

Substituting \eqref{th-symbol1}, \eqref{th-symbol2} to \eqref{th-together}, we arrive at the final result  
\begin{align}
	\left[\widetilde  W_{m}^{q_1, s_1}, \widetilde  W_{n}^{q_2, s_2} \right](w)&
	 \mathop{=}^{\phantom {\eqref{N-lemma-eq} } }  
	   -\sum_p \frac{\kappa_{s_1,s_2,-s_I}  }{2}    
	   \sum_{x=0}^p (-1)^{p-x} \binom{p}{x} 
	   (2q_1 -1 - p)_x  \    [n+q_2-1]_x&
	  \notag  \\ 
	  & \hspace{0.5cm}  \times  \ 
	   [2q_2 - 2 - x]_{p-x}  
	[m+q_1-1]_{p-x} 
      	\widetilde  W^{q_1+q_2-1-p, \ s_1+s_2-p-1}_{m+n}(w)
      \notag \\
      & \mathop{=}^{\eqref{N-lemma-eq}  }    -\sum_p \frac{\kappa_{s_1,s_2,-s_I}  }{2} 
      N\left(q_1, q_2, m, n, p\right) 
      \widetilde 	W^{q_1+q_2-1-p, \ s_1+s_2-p-1}_{m+n}(w).
  \end{align}
\end{proof}

\section{Supersymmetric and topological generalizations} 

The $\mathcal{N}=2$ supersymmetric $W_{\infty}$ algebra can be twisted to construct a topological $W_\infty$ algebra \cite{Pope_EtAl1991.Wtopological_matter_gravity}, which has been also discussed in the context of CCFT in particular for $w_\infty$ algebra~\cite{Ahn2022.Supersymmetric_infinity_Symmetry_Celestial_Conformal_Field_Theory}. 
Here we apply a similar construction for $\widetilde{W}_{1+\infty }$ algebra \eqref{Wgeneral}. For this purpose, we first construct an $\mathcal{N}=2$ supersymmetric extension of $\widetilde{W}_\infty$ algebra.\footnote{The notation ``super $\widetilde{W}_{\infty}$'' (and not ``$1+\infty$'') comes from the fact that the corresponding supersymmetric algebra should contain both $\widetilde{W}_{1+\infty} \times \widetilde{W}_\infty $ as a bosonic sector.}

\subsection{$\mathcal{N}=2$ supersymmetric extension} \label{super-w-section}
The general structure of the $\mathcal{N}=2$ super $W_{\infty}$ algebra can be schematically drawn as follows~\cite{Bergshoeff_EtAl1990.super_algebra}: 
\begin{align}  
   WW&\sim W       & W  \tilde W & \sim 0             &  G^\pm G^\pm  &\sim 0 \notag \\ \label{structure} 
\tilde W \tilde W  &\sim \tilde W        &  G^{\pm} G^{\mp }  &  \sim W \oplus \tilde W   &    &  \\
W G^{\pm }  &\sim G^{\pm}     & \tilde W G^\pm    &\sim G^\pm          & &   \notag
\end{align}
Here $W$ are $W_\infty$ currents, $\tilde W $ are $W_{1+\infty}$ currents ($ {W}_{1+\infty} \times {W}_\infty $ subalgebras constitute a bosonic sector of the new superalgebra), and $G^\pm $ are fermionic currents. 
In order to have this structure for $\widetilde{W}_\infty$ algebra, we define the fermionic currents $G^{q+}$ and $G^{q-}$:  
\begin{multline} 
		\{ G^{q_1- }_m,  G^{q_2 + }_n
	\} =- \sum_p \frac{\kappa_{s_1, s_2, -s_I} }{2} N(q_1, q_2, m,n,p) 
	\big( b_ {q_1, q_2,x}^{m,n,p} 
	\widetilde W^{q_1 +q_2 -p -1 }_{m+n}  
	+ (-1)^p 
	\tilde 	b_{q_1,q_2,x}^{m,n,p} 
	 \doublewidetilde{W}{}^{q_1 + q_2 -p -1 }_{m+n}  \big),  \label{{G,G}} 
\end{multline}
with additional structure constants $b_ {q_1, q_2, x}^{m,n,p}$ and $  \tilde 	b_{q_1, q_2, x}^{m,n,p} $ (unknown at the moment). Here $\widetilde W$ are  $\widetilde W_\infty$ currents, $  \doublewidetilde{W} $ are $\widetilde W_{1+\infty}$ currents. For the fermionic current $G^{q \pm}$, we require $s \in \mathbb{Z}_{> 0} + \frac{1}{2}$ so that they generate the fermionic sector of the algebra.
 Using the~Theorem~\ref{th-OPE-W}, we may switch \eqref{{G,G}} to the OPE form: 

\begin{corollary} 
 The OPE of the fermionic currents of $\mathcal{N}=2$ $\widetilde{W}_\infty $ algebra is given by
	\begin{multline}  \label{G-OPE} 
	G^{q_1 - } (z,\bar z) 
	G^{q_2 + } (w, \bar w)
	   \sim      \sum_p \frac{\kappa_{s_1,s_2,-s_I}  }{2} \frac{1}{z-w}    
	   \sum_{x=0}^p 
	   (-1)^{-x}
	    \binom{p}{x} 
	   (2q_1 -1 - p)_x  
	    \\*   \times [2q_2 - 2 - x]_{p-x}    
	  \bar{\partial}^{p-x}_{\bar z}  \ 
	     \bar{\partial}^{x}_{\bar w} \bigg[
	    \frac{1}{(\bar z - \bar w)}
	     \bigg( 
	     B_{q_1, q_2}^{p,x}
	     	\widetilde W^{  q_1+  q_2-1-p}{(w, \bar w)}
	     	\\*   
	     	+ (-1)^p 
	     	\tilde B_{q_1, q_2}^{p,x} 
	      \doublewidetilde{W}{}^{q_1+q_2-1-p}{(w, \bar w)} 
	     \bigg)  \bigg],
\end{multline}
where the new coefficients $B_{q_1, q_2}^{p,x}$ and $\tilde B_{q_1,q_2}^{p,x}$ which are independent of the Fourier degrees $(m,n)$, are determined from  $b_ {q_1,q_2,x}^{m,n,p}$ and $  \tilde 	b_{q_1, q_2,x}^{m,n,p} $  in the same way via the function $M \left(q_1, q_2,  \frac{\partial}{\partial \bar z} ,  \frac{\partial}{\partial \bar w}, p\right)$ from $N\left(q_1, q_2, m, n, p\right)$ as before.
\end{corollary}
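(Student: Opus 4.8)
The plan is to mirror the proof of Theorem~\ref{th-OPE-W}, run in parallel for the two current families appearing on the right-hand side. I would substitute the proposed OPE \eqref{G-OPE} into the fermionic analogue of the $2$D-commutator of Definition~\ref{definition-2D-commutator},
\[
 \{ G^{q_1-}_m,\, G^{q_2+}_n \}(w) = \oint_w \frac{\diff z}{2\pi\I}\oint_\epsilon \frac{\diff\bar w}{2\pi\I}\,\bar w^{\,n+q_2-1}\oint_\epsilon \frac{\diff\bar z}{2\pi\I}\,\bar z^{\,m+q_1-1}\; G^{q_1-}(z,\bar z)\,G^{q_2+}(w,\bar w),
\]
where the $G^{q\pm}$ carry antiholomorphic weight $q$, as for the bosonic $\widetilde W$-currents; the only structural change from the bosonic case is that radial ordering of two half-integer-spin currents turns the deformation of the $z$-contour into the anticommutator rather than the commutator. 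On the right-hand side of \eqref{G-OPE}, the summand proportional to $B^{p,x}_{q_1,q_2}\,\widetilde W^{q_1+q_2-1-p}$ and the one proportional to $(-1)^p\,\tilde B^{p,x}_{q_1,q_2}\,\doublewidetilde{W}{}^{q_1+q_2-1-p}$ each carry exactly the shape of the integrand in \eqref{W-curr-commutator-2}, up to the $(m,n)$-independent constants $B$ and $\tilde B$, so the two contributions can be treated independently.

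For each branch I would run the three contour integrals exactly as in \eqref{theorem-integral-part}--\eqref{th-integral-over-z}: the $\bar z$-integral converts $\bar{\partial}^{p-x}_{\bar z}(\bar z-\bar w)^{-1}$ together with $\bar z^{\,m+q_1-1}$ into the descending Pochhammer $[m+q_1-1]_{p-x}$ up to the sign recorded in \eqref{th-symbol2}; inserting the Fourier expansion \eqref{w-mode-expansion} with $\bar h=q_1+q_2-1-p$ --- the same normalization for the $\widetilde W_\infty$ and $\widetilde W_{1+\infty}$ towers --- the $\bar w$-integral localizes on the mode $m+n$ and produces $[n+q_2-1]_x$ as in \eqref{th-integral-over-w-bar} and \eqref{th-symbol1}, and the $z$-integral gives $1$. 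Collecting the prefactors and invoking the representation \eqref{N-lemma-eq}, the coefficient of $\widetilde W^{q_1+q_2-1-p,\,s_1+s_2-p-1}_{m+n}$ reduces, summed over $p$, to $-\tfrac{1}{2}\,\kappa_{s_1,s_2,-s_I}\sum_{x}(\cdots)\,B^{p,x}_{q_1,q_2}$, and likewise for $\doublewidetilde{W}$ with $\tilde B$. Matching this against \eqref{{G,G}} then both confirms the asserted form and reads off the dictionary: $B^{p,x}_{q_1,q_2}$ stands to $b^{m,n,p}_{q_1,q_2,x}$ exactly as the summand of $M$ stands to that of $N$, i.e.\ it is the $(m,n)$-independent coefficient obtained by retaining the top-degree monomials $m^{p-x}n^{x}$ of the $b$-weighted term and reinstating them as $\bar{\partial}^{p-x}_{\bar z}\bar{\partial}^{x}_{\bar w}$, which is precisely steps~1--3 of the procedure introduced before Theorem~\ref{th-OPE-W} applied verbatim; the same holds for $\tilde b \leftrightarrow \tilde B$.

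I expect the genuinely new difficulty to be the fermionic sign bookkeeping: one must check that the sign generated by deforming the $z$-contour for anticommuting currents is compatible with the $(-1)^{-x}$ and $(-1)^p$ factors already displayed in \eqref{G-OPE} and \eqref{{G,G}}, so that no spurious overall sign survives the reduction and the $\doublewidetilde{W}$-branch in particular emerges with the correct relative $(-1)^p$. A secondary point is that the ``$M$ from $N$'' dictionary is only meaningful if $b^{m,n,p}_{q_1,q_2,x}$ is polynomial in $(m,n)$ of total degree at most $p$, split as degree $p-x$ in $m$ and $x$ in $n$; this is exactly the polynomial structure that the explicit construction of the generators $G^{q\pm}$ in the next section must produce, so the Corollary is consistent with --- and serves as a template for --- that construction. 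Finally one should confirm that the step analogous to \eqref{th-integral-over-w-bar} localizes both towers on the same mode $m+n$, which is automatic once the common weight assignment $\bar h=q_1+q_2-1-p$ is used; this is legitimate because $q_1+q_2\in\mathbb{Z}$ for two half-integer spins.
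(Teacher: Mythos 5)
Your proposal is correct and follows essentially the same route as the paper: the Corollary is obtained by applying the procedure of Theorem~\ref{th-OPE-W} (the $M$-from-$N$ replacement, verified by the same three contour integrals of Definition~\ref{definition-2D-commutator}) separately to the $\widetilde W$- and $\doublewidetilde{W}$-branches of \eqref{{G,G}}, with the only modification being the anticommutator arising from radially ordering the half-integer-spin currents. Your added remarks on fermionic sign bookkeeping and on the implicit polynomial $(m,n)$-dependence of $b^{m,n,p}_{q_1,q_2,x}$ are consistent with, and slightly more careful than, what the paper leaves tacit.
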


In order to calculate some particular values of the structure constants $B_{q_1, q_2}^{p,x}$ and $\tilde B_{q_1,q_2}^{p,x}$  in \eqref{G-OPE}, we may use explicit realizations of our currents. Let us introduce ghost fields with the following OPE:  
\begin{equation}\label{ghost-OPE}
	b_i ( z, \bar z) c_j (w, \bar w) \sim 
	\frac{1}{z-w}
	 \frac{\delta_{ij} }{\bar z - \bar w}. 
\end{equation}
Then we introduce an explicit realization of the $\widetilde{W}_{1+\infty} $ currents \eqref{th-OPE-W-eq} with $ q \geq 2$: 
\begin{multline}
	 \label{realization-intro-1}
 \widetilde W^{q,s} (z, \bar z) \coloneqq  \sum_{k \geq 0 }  \kappa_{s_1,s_2,-s_0}  
 (q +2) \colon \bar \partial_{\bar z} c_k (z,\bar z) b_{q-1+k}  (z, \bar z) \colon  
 \\  + \kappa_{s_1,s_2,-s_0}    (k+1) \colon c_k (z, \bar z)
  \bar  \partial_{\bar z} b_{q-1+k} (z, \bar z) \colon 
 + \dots,  
\end{multline}
where  $s_0 \coloneqq s_I (p=1) = s_1 + s_2 -2 $, $\colon \cdot \,  \colon $ means normal-ordered product and $\dots$ states for the terms of the form $ \colon \bar \partial^\alpha_{\bar z}  c_k (z,\bar z) \bar  \partial^\beta_{\bar z} b_{q-1+k}  (z, \bar z) \colon$  with total degree $\alpha + \beta \geq 2 $. Similarly, we use the anticommuting ghosts $\tilde b_i (z,\bar z) $ and $\tilde c_j (w,\bar w)$ to construct a realization of the $\widetilde{W}_\infty  $ algebra, where we denote the currents by $\doublewidetilde{W}{}^{q,s} (z, \bar z) $.

Using Wick's theorem and \eqref{ghost-OPE}, one can check that the currents $w^{q} $, for which the terms given in \eqref{realization-intro-1} constitute the complete expression, \textit{i.e.} 
\begin{equation}
	 w^q (z, \bar z) \coloneqq 
	 - \sqrt{  \kappa_{s_1,s_2,-s_0}  }  \sum_{k \geq 0 } 
 (q +2) \colon 
 \bar \partial_{\bar z} c_k (z,\bar z) b_{q-1+k}  (z, \bar z) \colon  
 + (k+1) \colon c_k (z,\bar z) 
\bar  \partial_{\bar z} b_{q-1+k} (z,\bar z) \colon  
\end{equation}
generate the $w_{1+\infty} $ algebra \cite{Pope_EtAl1991.Conditions_anomalyfree_superW_algebras}\footnote{Here, no central term is considered.}: 
\begin{multline}
	\label{realization-ope1} 
	w^{q_1} (z,\bar z) w^{q_2} (w, \bar w) \sim -  \kappa_{s_1,s_2,-s_0}
	 (q_1 + q_2 -2)  \frac{ w^{q_1 + q_2 - 2  } (w,\bar w) }{ (\bar z - \bar w)^2 } 
	\\ -   \kappa_{s_1,s_2,-s_0}
	(q_1 -2) 
	\frac{\partial w^{q_1+q_2 -2}  (w,\bar w) }{ \bar z - \bar w }. 
\end{multline}
The OPE \eqref{realization-ope1} coincides with the main OPE \eqref{th-OPE-W-eq} in $p=0$ approximation (with no higher-derivative corrections). Hence, in general, the currents \eqref{realization-intro-1} generate the algebra \eqref{th-OPE-W-eq}. The expression for the explicit realization is given by 
\begin{equation}\label{realization-with-alpha-consts} 
	 \widetilde W^{q,s} (z, \bar z ) \coloneqq 
	 \sum_{k, n \geq 0 \atop \alpha + \beta = n  } \alpha_{\alpha, \beta }^{q, k}
	  \colon  
	  \bar \partial^\alpha_{\bar z}  c_k (z,\bar z) 
	  \bar \partial^\beta_{\bar z} b_{q-1+k}  (z,\bar z) \colon,
\end{equation}
where 
\begin{align}
	\alpha^{q,k}_{0,1} & = -  
	\sqrt{ \kappa_{s_1,s_2,-s_0} }
	 (q+2), 
	\\
	 \alpha^{q,k}_{0,1} & =-   
	\sqrt{ \kappa_{s_1,s_2,-s_0} }
	 (k+1), 
	 \\
	  &    \dots \notag 
\end{align} 

Now, in addition to the anticommuting ghosts $b_i (z, \bar z)$, $c_j (z, \bar z) $ for $\widetilde{W}^{q,s} $ and $\tilde b_i (z,\bar z)$, $\tilde c_j (z,\bar z) $ for $\doublewidetilde{W}{}^{q,s} $, we introduce the commuting ghosts $\beta_i (z,\bar z)$, $\gamma_j (z,\bar z) $ for $G^{q -} (z, \bar z)$ and  $\bar \beta_i (z,\bar z)$, $\bar \gamma_j (z,\bar z) $ for $G^{q +} (z, \bar z)$ \cite{Pope_EtAl1991.Conditions_anomalyfree_superW_algebras}: 
	\begin{subequations}
		\begin{align}
			\bar \beta_i (z,\bar z) 
			\gamma_j  (w,\bar w)  
		 &	\sim 
		\beta_i (z, \bar z)  \bar 	\gamma_j   (w,\bar w) \sim \frac{1}{z-w} 
			 \frac{\delta_{ij } }{\bar z - \bar w }, 
			 \\ 
			 \bar \gamma_i  (z,\bar z) 
			 \beta_j (w,\bar w)  
		&	\sim 
		  	\gamma_i   (z,\bar z)
		  	\bar \beta_j (w,\bar w)
		  	 \sim 
		- \frac{1}{z-w} 	 \frac{\delta_{ij } }{\bar z - \bar w }. 
		\end{align}
	\end{subequations}

Using this, one can calculate 
\begin{align}
		G^{q_1 - } (z,\bar z) 
	G^{q_2 + } (w, \bar w) \sim 
	 \frac{2 \kappa_{s_1,s_2,-s_0}  }{ \bar z - \bar w } \widetilde{W}^{ q_1 +q_2 -1 }
	&-2  \kappa_{s_1,s_2,-s_0}  \frac{ (q_1 +q_2 -2 )  }{(\bar z - \bar w  )^2}  \doublewidetilde{W}{}^{ q_1 + q_2 -1 } (w, \bar w) \notag 
	\\ &-2  \kappa_{s_1,s_2,-s_0}  \frac{(q_1 -1  )}{(\bar z - \bar w)} 
	\frac{\partial}{\partial \bar w} 
	\doublewidetilde{W}{}^{ q_1 +q_2 -1 } (w, \bar w) 
	+ \dots 	\label{GG-real}
\end{align}
or, in the anticommutator form (using mode expansions): 
\begin{align}
	 	\{ G^{q_1 - }_m,  G^{q_2 + }_n
	\}
	= 2  \kappa_{s_1,s_2,-s_0}   \widetilde{W}^{  q_1 +q_2 -1 }_{m+n} 
	- 2  \kappa_{s_1,s_2,-s_0}  [ m ( q_2  - 1 ) - n (q_1 - 1 )  ] 
 \doublewidetilde{W}{}^{ q_1 + q_2 -1 }_{m+n}
 + \dots 
\end{align}
By comparing \eqref{GG-real} and \eqref{G-OPE}, we can finally determine some values of the structure constants  $B_{q_1, q_2}^{p,x}$ and $\tilde B_{q_1,q_2}^{p,x}$ and conditions for them: 
\begin{subequations}
	\begin{align}
		B^{1,0}_{q_1,q_2} &= 0, \label{B-const-a} 
		\\
		\tilde B^{1,0}_{q_1,q_2} &= -2
		\cdot 
		\frac{(2q_1+q_2 - 3 )}{ (q_2 - 1)}, 
		\\
		B^{1,1}_{q_1,q_2} & = 0, 
		\\
		\tilde B^{1,1}_{q_1,q_2} &= -2, 
		\label{B-const-d} 
		\\
		& \dots \notag 
	\end{align}
\end{subequations}
 Further structure constants may be obtained by the following algorithm. First, one needs to build a realization of the higher-order terms of the algebra OPE. For this purpose, it would be helpful to consider terms with $\alpha + \beta \geq 2 $ in  \eqref{realization-with-alpha-consts}. After  OPE calculation $\widetilde W^{q_1, s_1} (z, \bar z) \widetilde W^{q_2, s_2} (w, \bar w) $, using Wick's theorem and expanding, one can compare the obtained expression with our result \eqref{th-OPE-W-eq}, and thereby determine $\alpha^{q,k}_{\alpha,\beta} $ for some further $\alpha, \beta $. Then, using the obtained $\alpha^{q,k}_{\alpha,\beta}$, one can similarly construct  realizations of the fermionic currents $G^{q,s\pm }$ and repeat the procedure to calculate the structure constants $B^{p,x}_{q_1,q_2} $ and $\tilde B^{p,x}_{q_1,q_2} $. 

Here we calculated some particular values of the structure constants \eqref{B-const-a}-\eqref{B-const-d} and provided an algorithm to do it for the cases $p,x>1 $, while its practical implementation might require more effort from the calculation point of view.

\subsection{Topological twist} \label{section-topological}
The first step towards the topological twist is to define the BRST operator: 
\begin{theorem}[BRST operator for super $\widetilde W_\infty$] 
	\label{th-brst-operator}
	The BRST operator for the super $\widetilde W_\infty$ algebra is given by 
	\begin{equation}
		Q = G^{  3/2 + }_{-1/2 } (z), 
	\end{equation}
	where the lower index numerates the Fourier mode coefficient.
\end{theorem}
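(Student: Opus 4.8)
The defining property of a BRST operator is nilpotency, $Q^2 = 0$, so that is what I would establish. Since $Q = G^{3/2+}_{-1/2}$ is a Fourier mode of a fermionic ($+$) current it is an odd operator, and hence $Q^2 = \tfrac12\{Q,Q\}$. The claim therefore reduces to the single statement $\{\,G^{3/2+}_{-1/2},\,G^{3/2+}_{-1/2}\,\} = 0$, which I would obtain as the $q_1=q_2=\tfrac32$, $m=n=-\tfrac12$ specialization of the more general relation $\{\,G^{q_1+}_m,\,G^{q_2+}_n\,\}=0$ for all $q_1,q_2,m,n$.

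To prove that general vanishing I would give two complementary arguments. The structural one: in the schematic $\mathcal N=2$ pattern \eqref{structure} one has $G^{+}G^{+}\sim 0$, and this is forced rather than optional — the fermionic currents carry definite $U(1)$ fermion number $\pm1$ while the bosonic currents $W,\widetilde W$ (resp. $\doublewidetilde{W}$) carry fermion number $0$, so a product $G^{+}G^{+}$ of total charge $+2$ has nothing to flow into on the right-hand side; it must be regular. The concrete one: using the explicit realization, $G^{q+}$ is built solely out of the commuting ghosts $\bar\beta_i,\bar\gamma_j$, whereas the only non-vanishing ghost OPEs are the mixed ones ($\bar\beta$–$\gamma$, $\beta$–$\bar\gamma$, $\bar\gamma$–$\beta$, $\gamma$–$\bar\beta$); the self-contractions $\bar\beta$–$\bar\beta$, $\bar\beta$–$\bar\gamma$, $\bar\gamma$–$\bar\gamma$ all vanish, so by Wick's theorem $G^{q_1+}(z,\bar z)\,G^{q_2+}(w,\bar w)$ is regular, and extracting modes with the contour prescription of Definition~\ref{definition-2D-commutator} gives $\{G^{q_1+}_m,G^{q_2+}_n\}=0$. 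Either route immediately yields $Q^2=0$.

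It then remains to argue that this particular mode deserves the name ``BRST operator'' and not merely ``a nilpotent element''. Here I would invoke the standard structure of a topologically twisted $\mathcal N=2$ algebra: $Q=G^{+}_{-1/2}$ is exactly the canonical form of the $\mathcal N=2$ BRST charge, i.e. the conserved charge of the weight-one current that $G^{3/2+}$ becomes after the twist shifts the $\bar z$-weights of the fermionic currents by $\mp\tfrac12$ according to their fermion number; the lowest admissible $q$ in the fermionic sector is $q=\tfrac32$, which is the superconformal $G^{\pm}$ itself. Consistency of the twisted grading (fermion/ghost number and conformal spin) follows from the charge assignments visible in \eqref{GG-real}, and the statement that the $Q$-cohomology actually reproduces a topological $\widetilde W_\infty$ algebra is precisely the content of Theorem~\ref{th-new-generators}, so for the present theorem this identification can be taken as the defining choice.

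I do not expect a computational obstacle: once $G^{+}G^{+}\sim0$ is accepted, nilpotency is one line. The only genuine subtlety is the conceptual point flagged above — one must be sure that $G^{+}G^{+}\sim 0$ is genuinely imposed by fermion-number conservation on the $\mathcal N=2$ super $\widetilde W_\infty$ algebra \eqref{structure}, so that the still-undetermined higher structure constants $B^{p,x}_{q_1,q_2}$, $\tilde B^{p,x}_{q_1,q_2}$ appearing in \eqref{G-OPE} cannot reinstate a non-zero right-hand side and spoil $Q^2=0$.
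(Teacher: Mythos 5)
Your proposal is sound but follows a genuinely different route from the paper. The paper's proof is a one-line computation of a definitional nature: it takes $Q$ to be the $\bar z$-contour charge of the fermionic current, $Q=\oint_\epsilon \frac{\diff \bar z}{2\pi \I}\, G^{3/2+}(z,\bar z)$, inserts the mode expansion $G^{3/2+}(z,\bar z)=\sum_r G^{3/2+}_r(z)\,\bar z^{-r-3/2}$, and reads off that the residue selects $r=-1/2$, hence $Q=G^{3/2+}_{-1/2}(z)$; nilpotency is never checked there, being implicit in the $G^\pm G^\pm\sim 0$ entry of the postulated $\mathcal N=2$ structure \eqref{structure}. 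You do the complementary thing: you verify $Q^2=\tfrac12\{Q,Q\}=0$ from $\{G^{q_1+}_m,G^{q_2+}_n\}=0$, argued both by fermion-number conservation in \eqref{structure} and by the absence of $\bar\beta$--$\bar\gamma$ self-contractions in the ghost realization (both consistent with the paper's setup), while you take the identification ``charge of $G^{3/2+}$ equals the mode $G^{3/2+}_{-1/2}$'' essentially as the canonical twisted-$\mathcal N=2$ fact rather than deriving it from the $\bar z$-mode expansion as the paper does. So your argument supplies the consistency check the paper leaves tacit, and the paper supplies the explicit residue computation that pins down the index $-1/2$; ideally one would include both. One remark on your flagged worry: the undetermined constants $B^{p,x}_{q_1,q_2}$, $\tilde B^{p,x}_{q_1,q_2}$ enter only the mixed OPE $G^-G^+$ in \eqref{G-OPE}, whereas $G^+G^+\sim 0$ is part of the assumed $\mathcal N=2$ structure \eqref{structure} (and is automatic in the $\bar\beta,\bar\gamma$ realization), so they cannot reinstate a nonzero $\{Q,Q\}$; within the paper's framework this is an input of the superalgebra, not something the higher structure constants could overturn.
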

\begin{proof}
 The BRST operator is given by the contour integral of the fermionic current,
	\begin{align}
	Q = \oint_\epsilon \frac{\diff \bar z}{2\pi \I } G^{3/2+}(z, \bar z) &=
	 \sum_r  \oint_\epsilon \frac{\diff \bar z}{2\pi \I } G^{3/2+}_r (z) \ \bar z^{-r - 3/2 } 
	 \notag \\ 
	 &= \sum_r G^{3/2+}_r (z)\delta_{r,-1/2} =
	 G^{3/2+}_{-1/2} (z). 
\end{align}   
\end{proof}

We again note that we have an additional $\bar w$-dependence for celestial currents. 
Now let us proceed to determine the generators of the topological $\widetilde{W}_\infty$ algebra. 

\begin{theorem}[Generators of topological $\widetilde W_\infty$ algebra] \label{th-new-generators} 
	We define the new generators by 
\begin{equation}\label{def-topological-generators} 
	\hat V^{q} (w,\bar w) \coloneqq \left\{ Q,  G^{q-}(w, \bar w)  \right\},
\end{equation}
which are BRST-exact and hence generate a topological algebra. 
	They are given by  
 \begin{align}
	\hat V^{q} (w,\bar w)  =   \sum_p \frac{\kappa_{s_1,s_2,-s_I}  }{2} \frac{1}{z-w} 
	\frac{\partial^p}{\partial \bar w^p} 
	 \bigg(   (-1)^p   
	     B_{3/2, q}^{p,p}
	     	\widetilde W^{-\frac{1}{2} +q -p}{(w, \bar w)}
	\notag   \\  	
	+  \tilde B_{3/2, q}^{p,p}
	     	\doublewidetilde{W}{}^{-\frac{1}{2} +q-p}{(w, \bar w)} 
	     \bigg). 
\end{align}

\end{theorem}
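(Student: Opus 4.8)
The plan is to compute the anticommutator $\{Q, G^{q-}(w,\bar w)\}$ directly from the definition $Q = G^{3/2+}_{-1/2}(z)$ together with the OPE of the fermionic currents given in the Corollary, equation \eqref{G-OPE}. First I would write $Q$ as the contour integral $\oint_\epsilon \frac{\diff\bar z}{2\pi\I}\, G^{3/2+}(z,\bar z)$ as in the proof of Theorem~\ref{th-brst-operator}, so that
\[
	\hat V^{q}(w,\bar w) = \oint_\epsilon \frac{\diff\bar z}{2\pi\I}\, \Big( G^{3/2+}(z,\bar z)\, G^{q-}(w,\bar w) \Big)_{+},
\]
where I only keep the singular part since the contour picks out the pole. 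The key input is that \eqref{G-OPE} was stated for $G^{q_1-}G^{q_2+}$; to get $G^{3/2+}(z,\bar z)\,G^{q-}(w,\bar w)$ I would use the (anti)symmetry/reordering of the fermionic OPE — i.e.\ express $G^{3/2+}G^{q-}$ in terms of $G^{q-}G^{3/2+}$ evaluated with $q_1 = q$, $q_2 = 3/2$, picking up the appropriate sign and swapping the roles of $(z,\bar z)$ and $(w,\bar w)$. Since the result in the statement has $B_{3/2,q}$ and $\tilde B_{3/2,q}$ (rather than $B_{q,3/2}$), the cleanest route is probably to apply \eqref{G-OPE} directly with $q_1 = 3/2$, $q_2 = q$, noting that $Q$ is built from the "$+$'' current so the relevant OPE is $G^{3/2+}(z,\bar z)\,G^{q-}(w,\bar w)$, which is the reordering of $G^{q-}G^{3/2+}$; I would carry the sign through carefully.

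Next I would substitute $q_1 = 3/2$ into the Pochhammer factors in \eqref{G-OPE}. The crucial simplification is that $(2q_1 - 1 - p)_x = (2 - p)_x$ at $q_1 = 3/2$, and this ascending Pochhammer symbol $(2-p)(3-p)\cdots(2-p+x-1)$ vanishes whenever one of its factors hits zero — explicitly, $(2-p)_x = 0$ unless $x = p$ (for $p\ge 1$; the $p=0$ term is trivial). This collapses the inner sum $\sum_{x=0}^p$ to the single term $x = p$, which is exactly why the final formula has only $B_{3/2,q}^{p,p}$ and $\tilde B_{3/2,q}^{p,p}$ and a single derivative $\partial^p/\partial\bar w^p$ rather than the split $\bar\partial_{\bar z}^{p-x}\bar\partial_{\bar w}^{x}$. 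With $x = p$ the factor $\bar\partial_{\bar z}^{p-x} = \bar\partial_{\bar z}^0$ is trivial, so the $\bar z$-dependence enters only through $1/(\bar z - \bar w)$, and the $\oint_\epsilon \diff\bar z$ contour integral then just extracts the mode $r = -1/2$ of the expansion in $\bar z$, exactly as in the proof of Theorem~\ref{th-brst-operator}. The remaining $\bar\partial_{\bar w}^{p}$ acts on $\widetilde W^{\,3/2+q-1-p}/(\bar z-\bar w)$, and after the $\bar z$-integral picks the residue we are left with $\partial^p_{\bar w}$ acting on the currents, together with the binomial $\binom{p}{p} = 1$ and the sign $(-1)^{-x} = (-1)^{-p} = (-1)^p$ in front of the $\widetilde W$ term (the $\doublewidetilde W$ term carries the extra $(-1)^p$ from \eqref{G-OPE}, which then cancels, explaining the asymmetry between the two terms in the statement).

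I expect the main obstacle to be bookkeeping of signs and of the $\bar z$-contour integral against the antiholomorphic mode expansion: one has to be careful that $Q = G^{3/2+}_{-1/2}$ corresponds to the $\bar z^{-r-3/2}$ coefficient with $r = -1/2$, i.e.\ the $\bar z^{-1}$ term, and that after differentiating $1/(\bar z-\bar w)$ with $\bar\partial_{\bar w}^p$ and expanding in $\bar z$ around $0$ the residue prescription reproduces precisely $\partial^p_{\bar w}$ on the current without stray combinatorial factors — this is the analogue of the $[-p+x-1]_{-m-q_1}$ manipulation in the proof of Theorem~\ref{th-OPE-W}, and the same Pochhammer identity \eqref{pochhammer3} will be needed. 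A secondary subtlety is justifying the reordering/symmetry of the $GG$ OPE used to land on arguments $(3/2,q)$; if that is awkward, an alternative is to expand both currents in modes, use the anticommutator form of \eqref{{G,G}} directly, set $(m,q_1) = (-1/2, 3/2)$, and invoke the same vanishing of $(2-p)_x$ for $x<p$ to collapse $N(3/2, q, -1/2, n, p)$ to its single surviving term, then re-assemble the OPE via the $M$-function substitution described in steps~1--3 of Section~\ref{OPE-paragrpah}. Either way, once the sum collapses to $x=p$ the remaining computation is routine and matches the claimed expression.
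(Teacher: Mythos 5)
Your overall route matches the paper's: write $Q$ as a contour integral of $G^{3/2+}$, apply the OPE \eqref{G-OPE} with $(q_1,q_2)=(3/2,q)$, collapse the inner sum to the single term $x=p$, and then perform the $\bar z$ integration. However, the step that does the collapsing is justified incorrectly. You claim that at $q_1=3/2$ the factor $(2q_1-1-p)_x=(2-p)_x$ vanishes unless $x=p$. This is false: $(2-p)_x=(2-p)(3-p)\cdots(x+1-p)$ contains a zero factor precisely when $p\ge 2$ and $x\ge p-1$, so it is \emph{nonzero} for all $x\le p-2$ (and for every $x$ when $p=0,1$), and it actually vanishes at $x=p$ itself once $p\ge 2$ --- the opposite of what you assert, and in tension with the very $x=p$ terms you keep. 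As written, your argument therefore does not dispose of the $x<p$ terms at all, and this is the crux of the derivation, so it is a genuine gap.

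The correct mechanism, and the one the paper uses, is the $\bar z$-integration itself: acting with $\bar\partial_{\bar z}^{\,p-x}$ on $1/(\bar z-\bar w)$ produces a pole of order $p-x+1$ at $\bar z=\bar w$, and the contour integral defining $\{Q,\cdot\}$ (taken around $\bar w$) gives $\oint \frac{\diff\bar z}{2\pi\I}\,(\bar z-\bar w)^{-(p-x)-1}=\delta_{p,x}$, so only the simple-pole term $x=p$ survives; no property of the Pochhammer symbol is needed. Since you invoke the contour integral only \emph{after} the (incorrect) collapse, the fix is simply to apply this residue argument to all $x$, after which your remaining bookkeeping (the sign $(-1)^p$, $\binom{p}{p}=1$, $[2q-2-p]_0=1$, the single $\partial_{\bar w}^{\,p}$) goes through as in the paper. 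In your fallback mode-expansion route the same correction applies: setting $m=-1/2$, $q_1=3/2$ in the representation \eqref{N-lemma-eq}, the collapse comes from $[m+q_1-1]_{p-x}=[0]_{p-x}=\delta_{p,x}$, not from $(2-p)_x$. Your observation about reordering $G^{+}G^{-}$ versus $G^{-}G^{+}$ is a fair point that the paper glosses over by applying \eqref{G-OPE} directly with labels $(3/2,q)$.
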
 
\begin{proof} 
The definition \eqref{def-topological-generators} can be re-written as follows: 
\begin{equation}
	\hat V^{q} (w,\bar w) = \oint_{\bar w} \frac{\diff \bar z}{2 \pi \I }
	G^{3/2+} (z,\bar z) G^{q-} (w, \bar w), 
\end{equation} 
To obtain $G^{3/2+} (z,\bar z) G^{q-} (w, \bar w) $, we use the OPE \eqref{G-OPE} for the case $q_1 = \dfrac{3}{2} $,  $q_2 \eqqcolon  q \in \mathbb{Z}/2$ and take the derivative over $ \bar z$: 
\begin{multline}
G^{3/2 + } (z,\bar z) 
	G^{q - } (w, \bar w)
	   \sim      \sum_p \frac{\kappa_{s_1,s_2,-s_I}  }{2} \frac{1}{z-w}    
	   \sum_{x=0}^p 
	    \binom{p}{x} 
	   (2- p)_x  [2 q - 2 - x]_{p-x}  (p-x)!
	   \\ 
	 \times  
	     \frac{\partial^x}{\partial \bar w^x}
	     \bigg[
	 (\bar z - \bar w)^{-p+x-1}
	     \bigg(
	     (-1)^p   
	     B_{3/2, q}^{p,x}
	     	\widetilde W^{-\frac{1}{2} +q -p}{(w, \bar w)}
	     	+   
	     	\tilde B_{3/2, q}^{p,x}
	     	\doublewidetilde{W}{}^{-\frac{1}{2} +q-p}{(w, \bar w)} 
	     \bigg)  \bigg].   	
\end{multline}
The integration over $\bar z$ 	gives: 
\begin{equation}
	\oint_{\bar w} \frac{\diff \bar z}{2\pi \I } \frac{1}{(\bar z - \bar w)^{p -x +1}} = \delta_{p,x}
\end{equation}
This allows us to remove the sum over $x$ and obtain 
\begin{align}
	\hat V^{q} (w,\bar w)  =   \sum_p \frac{\kappa_{s_1,s_2,-s_I}  }{2} \frac{1}{z-w} 
	\frac{\partial^p}{\partial \bar w^p} 
	 \bigg(   (-1)^p   
	     B_{3/2, q}^{p,p}
	     	\widetilde W^{-\frac{1}{2} +q -p}{(w, \bar w)}
	\notag   \\  	
	+  \tilde B_{3/2, q}^{p,p}
	     	\doublewidetilde{W}{}^{-\frac{1}{2} +q-p}{(w, \bar w)} 
	     \bigg). 
\end{align}
\end{proof}

Now let us discuss the algebraic relations. We re-write \eqref{def-topological-generators} in terms of the Fourier modes 
\begin{equation}\label{def-topological-generators-fourier}
		\hat V^{q}_m (w,\bar w) \coloneqq \left\{ Q,  G^{q-}_{m+1/2} 
		(w)  \right\},
\end{equation}
to compute the following commutator: 
\begin{align}
	[ \hat V^{q_1}_m, \hat V^{q_2}_n ]  & \mathop{=}^\eqref{def-topological-generators-fourier}  
		\left[ 
	\left\{ G^{3/2+}_{-1/2} (z), \ G^{q_1 -}_{m+1/2} (z) 
	\right\},
	\ 
	\left\{ G^{3/2+}_{-1/2} (z), \ G^{q_2 -}_{n+1/2} (z) 
	\right\} 
	\right] 
	\notag \\
	&  \mathop{=}^{\phantom{\eqref{def-topological-generators-fourier}}}   
	\left\{ G^{3/2+}_{-1/2} (z), \ \left[
		G^{q_2 -}_{n+1/2} (z), \
		\left\{ G^{3/2+}_{-1/2} (z), \ G^{q_1 -}_{m+1/2} (z) 
		\right\} 
	\right]
	\right\} \notag
	\\ 
	& \mathop{+}^{\phantom{\eqref{def-topological-generators-fourier}}} 
	\left\{ G^{q_2 -}_{n+1/2} (z), \ 
		\left[ 
		\left\{ 
		G^{3/2+}_{-1/2} (z),  \ 	G^{q_1 -}_{m+1/2} (z)
		\right\}, \  
		G^{3/2+}_{-1/2} (z)
		\right]   
	\right\} 
	\notag
	\\
	&  \mathop{=}^{\phantom{\eqref{def-topological-generators-fourier}}}   
	\left\{ G^{3/2+}_{-1/2} (z),  \ 
	\left[G^{q_2 -}_{n+1/2} (z), \ 
	\left\{ 
	G^{3/2+}_{-1/2} (z),  \  G^{q_1 -}_{m+1/2} (z)
	\right\} 
	\right] 
	\right\},  \label{commutators-jacobi}
\end{align}
where we used the Jacobi identity. 
The conditions for the Jacobi identity to hold in the case of $\widetilde W_{1+\infty}$ 
were obtained in \cite{Mago_EtAl2021.Deformed_infinity_Algebras_Celestial_CFT}: 
\begin{align}
	\frac{\kappa_{0,1,1}}{\kappa_{-2,2,2}} = 
	\frac{\kappa_{1,1,2}}{\kappa_{0,2,2}}; 
	\quad  
	\frac{\kappa_{-1,1,1}}{\kappa_{-1,1,2}} = 
	\frac{\kappa_{1,1,1}}{3\kappa_{1,1,2}};
	\quad 
	\kappa_{-1,1,1}=\kappa_{0,0,1};
	\quad 
	\kappa^2_{0,1,1}=2\kappa_{1,1,1}\kappa_{-1,1,1}.
\end{align}
We assume that the conditions are the same for the super $\widetilde W_\infty $, hence the Jacobi identity holds here as well. However, it might be possible to refine these conditions using the provided explicit realizations \eqref{realization-with-alpha-consts} in future work. 

Hence from \eqref{structure} we see that the right-hand side of \eqref{commutators-jacobi} leads to $\hat V_l^{q} $ again.
We can write 
\begin{equation}
		[ \hat V^{q_1}_m, \hat V^{q_2}_n ] = \sum_{p} \hat g (q_1, q_2, m,n, p) \hat V^{q_1 + q_2-p-1}_{m+n}, 
\end{equation}
where structure constants $\hat g$ could be expressed in terms of  $b_ {q_1,q_2,x}^{m,n,p}$ and $  \tilde 	b_{q_1, q_2,x}^{m,n,p} $ from \eqref{{G,G}}. 

In the same way, we can obtain other relations for the generators, and thus the full structure of topological $\widetilde W_{\infty}$ algebra is given in the form:  
\begin{subequations}
\begin{align}
	[ \hat V^{q_1}_m, \hat V^{q_2}_n ] &= \sum_{p} \hat g (q_1, q_2, m,n, p) \hat V^{q_1 +q_2-p-1}_{m+n},  \label{topological1} 
	\\
	[ \hat V^{q_1}_m, \hat G^{q_2}_{n+1/2} ] &= \sum_{p} \hat g (q_1, q_2, m,n, p) \hat G^{q_1 + q_2-p-1}_{m+n+1/2}, \label{topological2} 
	\\
	\{ G^{q_1}_{m+1/2}, G^{q_2}_{n+1/2} \} & =0. 
	\label{topological3} 
\end{align}
\end{subequations} 
 
Although determining the structure constants $\hat g$ is difficult in general, we are able to compare them with a specific example.
It has been known that $\widetilde{W}_\infty$ is reduced to $w_\infty$ as follows:
rescaling the generators
\begin{equation}
	v^{q}_m \to \lambda^{ q-2} \hat V^{q}_m, 
	\quad 
	G^{q}_m \to \lambda^{q-2}  \hat G^{q}_m, 
\end{equation}
and then taking the limit $\lambda \to 0$ with $p=1$. 
Hence, from \eqref{topological1} and \eqref{topological2}, we obtain 
\begin{subequations}
\begin{align}
	[ v^{q_1}_m, v^{q_2}_n ] &=
	 \hat g (q_1,q_2, m,n, 1)  v^{q_1 + q_2-2}_{m+n}, 
	 \\
	 [ v^{q_1}_m,  \hat G^{q_2}_{n+1/2} ] &= 
	 \hat g (q_1, q_2, m,n, 1) \hat G^{q_1 + q_2-2}_{m+n+1/2}.
\end{align}
\end{subequations}
Compared with the results obtained in \cite{Ahn2022.Supersymmetric_infinity_Symmetry_Celestial_Conformal_Field_Theory}, we conclude that 
\begin{equation}\label{particular-const} 
	\hat g (q_1, q_2, m,n, 1) = m (q_2-1) - n (q_1 - 1). 
\end{equation}

\section{Conclusion and outlook}
\label{conclusion-section} 
In this work, we investigated a deformation of $w_{1+\infty}$ that we call $\widetilde{W}_{1+\infty}$ algebra \eqref{Wgeneral}, which was previously obtained in \cite{Mago_EtAl2021.Deformed_infinity_Algebras_Celestial_CFT} in the context of CCFT.
This algebra, 
despite being different from the known $W_{1+\infty}$, this algebra shares several common properties, which allows us to apply analogous approaches to study its algebraic structure. 
In particular, we obtained the OPE for $\widetilde W_{1+\infty}$-algebra (Theorem~\ref{th-OPE-W}, Section \ref{OPE-paragrpah}). 

We constructed the BRST operator of $\mathcal{N}=2$ supersymmetric $\widetilde W_{\infty}$-algebra (Theorem~\ref{th-brst-operator}) and then obtained the generators of the topological $\widetilde W_{\infty}$-algebra (Theorem~\ref{th-new-generators}) in Section~\ref{section-topological}. We determined the structure constants $B_{q_1, q_2}^{p,x} $ and $\tilde B_{q_1, q_2}^{p,x} $ in \eqref{G-OPE} in several particular cases and outline the possible algorithm of treatment the others. 
Along this direction,  calculating the structure constants  in more cases (by, for example, practical implementation of the provided in \ref{super-w-section} algorithm), or even obtaining a closed-form expression might be a subject of future work.

It would be also important to revisit the issue of physical interpretation of the supersymmetric and topological generalizations, presented in this paper in the context of CCFT and compare them to the known results concerning the interpretation of $W$-symmetry in CCFT in the non-commutative setup  \cite{Monteiro_2023,Bu_EtAl2022.Moyal_deformations_1_infty_celestial_holography, guevara2022gravity, Bittleston_EtAl2023.Celestial_Chiral_Algebra_SelfDual_Gravity_EguchiHanson_Space}. 

\section*{Acknowledgements}
P.\,D. acknowledges the financial support from Math4Phys program of Universit\'e de Bourgogne \textit{\&} Universit\'e Bourgogne Franche-Comt\'e. This work was in part supported by EIPHI Graduate School (No.~ANR-17-EURE-0002) and Bourgogne-Franche-Comt\'e region.

\bibliographystyle{amsalpha_mod}
\bibliography{bibliography_bibtex}

\newcommand{\etalchar}[1]{$^{#1}$}
\providecommand{\bysame}{\leavevmode\hbox to3em{\hrulefill}\thinspace}
\providecommand{\MR}{\relax\ifhmode\unskip\space\fi MR }
\providecommand{\MRhref}[2]{%
  \href{http://www.ams.org/mathscinet-getitem?mr=#1}{#2}
}
\providecommand{\href}[2]{#2}
\begin{thebibliography}{MRSV21}

\bibitem[Ahn22]{Ahn2022.Supersymmetric_infinity_Symmetry_Celestial_Conformal_Field_Theory}
C.~Ahn, \emph{Toward {{A Supersymmetric}} w(1+infinity) {{Symmetry}} in the {{Celestial Conformal Field Theory}}}, \href{https://doi.org/10.1103/PhysRevD.105.086028}{Phys. Rev. D \textbf{105} (2022)}, no.~8, 086028, \href{https://arxiv.org/abs/2111.04268}{{\ttfamily arxiv:2111.04268 [hep-th]}}.

\bibitem[BHS22]{Bu_EtAl2022.Moyal_deformations_1_infty_celestial_holography}
W.~Bu, S.~Heuveline, and D.~Skinner, \emph{Moyal deformations, \${{W}}\_\{1+\textbackslash infty\}\$ and celestial holography}, October 2022.

\bibitem[BHS23]{Bittleston_EtAl2023.Celestial_Chiral_Algebra_SelfDual_Gravity_EguchiHanson_Space}
R.~Bittleston, S.~Heuveline, and D.~Skinner, \emph{The {{Celestial Chiral Algebra}} of {{Self-Dual Gravity}} on {{Eguchi-Hanson Space}}}, May 2023.

\bibitem[BPR{\etalchar{+}}90]{Bergshoeff_EtAl1990.super_algebra}
E.~Bergshoeff, C.~Pope, L.~Romans, E.~Sezgin, and X.~Shen, \emph{The super {{W}}{$\infty$} algebra}, \href{https://doi.org/10.1016/0370-2693(90)90672-S}{Physics Letters B \textbf{245} (1990)}, no.~3-4, 447--452.

\bibitem[DPS19]{Donnay_EtAl2019.Conformally_Soft_Photons_Gravitons}
L.~Donnay, A.~Puhm, and A.~Strominger, \emph{Conformally {{Soft Photons}} and {{Gravitons}}}, \href{https://doi.org/10.1007/JHEP01(2019)184}{J. High Energ. Phys. \textbf{2019} (2019)}, no.~1, 184, \href{https://arxiv.org/abs/1810.05219}{{\ttfamily arxiv:1810.05219 [hep-th]}}.

\bibitem[GHPS21]{Guevara_EtAl2021.Holographic_Symmetry_Algebras_Gauge_Theory_Gravity}
A.~Guevara, E.~Himwich, M.~Pate, and A.~Strominger, \emph{Holographic {{Symmetry Algebras}} for {{Gauge Theory}} and {{Gravity}}}, \href{https://doi.org/10.1007/JHEP11(2021)152}{J. High Energ. Phys. \textbf{2021} (2021)}, no.~11, 152, \href{https://arxiv.org/abs/2103.03961}{{\ttfamily arxiv:2103.03961 [hep-th]}}.

\bibitem[Gue22]{guevara2022gravity}
A.~Guevara, \emph{Towards gravity from a color symmetry}, 2022.

\bibitem[HPS22]{Himwich_EtAl2022.Celestial_Operator_Product_Expansions_infinity_Symmetry_All_Spins}
E.~Himwich, M.~Pate, and K.~Singh, \emph{Celestial {{Operator Product Expansions}} and w 1+infinity {{Symmetry}} for {{All Spins}}}, \href{https://doi.org/10.1007/JHEP01(2022)080}{J. High Energ. Phys. \textbf{2022} (2022)}, no.~1, 80, \href{https://arxiv.org/abs/2108.07763}{{\ttfamily arxiv:2108.07763 [hep-th]}}.

\bibitem[HS19]{Himwich_Strominger2019.Celestial_Current_Algebra_Low_Subleading_Soft_Theorem}
E.~Himwich and A.~Strominger, \emph{Celestial {{Current Algebra}} from {{Low}}'s {{Subleading Soft Theorem}}}, \href{https://doi.org/10.1103/PhysRevD.100.065001}{Phys. Rev. D \textbf{100} (2019)}, no.~6, 065001, \href{https://arxiv.org/abs/1901.01622}{{\ttfamily arxiv:1901.01622 [hep-th]}}.

\bibitem[HTLT20]{Hubner_EtAl2020.Thanks_memory_measuring_gravitationalwave_memory_first_LIGO_Virgo_gravitationalwave_transient_catalog}
M.~H{\"u}bner, C.~Talbot, P.~D. Lasky, and E.~Thrane, \emph{Thanks for the memory: measuring gravitational-wave memory in the first {{LIGO}}/{{Virgo}} gravitational-wave transient catalog}, \href{https://doi.org/10.1103/PhysRevD.101.023011}{Phys. Rev. D \textbf{101} (2020)}, no.~2, 023011, \href{https://arxiv.org/abs/1911.12496}{{\ttfamily arxiv:1911.12496 [astro-ph, physics:gr-qc]}}.

\bibitem[Jia22]{Jiang2022.Holographic_Chiral_Algebra_Supersymmetry_Infinite_Ward_Identities_EFTs}
H.~Jiang, \emph{Holographic {{Chiral Algebra}}: {{Supersymmetry}}, {{Infinite Ward Identities}}, and {{EFTs}}}, \href{https://doi.org/10.1007/JHEP01(2022)113}{J. High Energ. Phys. \textbf{2022} (2022)}, no.~1, 113, \href{https://arxiv.org/abs/2108.08799}{{\ttfamily arxiv:2108.08799 [hep-th]}}.

\bibitem[KS18]{Kravchuk_Simmons-Duffin2018.Lightray_operators_conformal_field_theory}
P.~Kravchuk and D.~{Simmons-Duffin}, \emph{Light-ray operators in conformal field theory}, \href{https://doi.org/10.1007/JHEP11(2018)102}{J. High Energ. Phys. \textbf{2018} (2018)}, no.~11, 102, \href{https://arxiv.org/abs/1805.00098}{{\ttfamily arxiv:1805.00098 [hep-th, physics:math-ph]}}.

\bibitem[Mon23]{Monteiro_2023}
R.~Monteiro, \emph{Celestial chiral algebras, colour-kinematics duality and integrability}, \href{https://doi.org/10.1007/jhep01(2023)092}{Journal of High Energy Physics \textbf{2023} (2023)}, no.~1.

\bibitem[MRSV21]{Mago_EtAl2021.Deformed_infinity_Algebras_Celestial_CFT}
J.~Mago, L.~Ren, A.~Y. Srikant, and A.~Volovich, \emph{Deformed w(1+infinity) {{Algebras}} in the {{Celestial CFT}}}, November 2021.

\bibitem[Pop91]{popeLecturesAlgebrasGravity1991}
C.~N. Pope, \emph{Lectures on {{W}} algebras and {{W}} gravity}, December 1991.

\bibitem[PPR21]{Pasterski_EtAl2021.Celestial_Holography}
S.~Pasterski, M.~Pate, and A.-M. Raclariu, \emph{Celestial {{Holography}}}, November 2021.

\bibitem[PRS90]{Pope_EtAl1990.complete_structure}
C.~Pope, L.~Romans, and X.~Shen, \emph{The complete structure of {{W}}{$\infty$}}, \href{https://doi.org/10.1016/0370-2693(90)90822-N}{Physics Letters B \textbf{236} (1990)}, no.~2, 173--178.

\bibitem[PRS91]{Pope_EtAl1991.Conditions_anomalyfree_superW_algebras}
\bysame, \emph{Conditions for anomaly-free {{W}} and super-{{W}} algebras}, \href{https://doi.org/10.1016/0370-2693(91)91175-U}{Physics Letters B \textbf{254} (1991)}, no.~3-4, 401--410.

\bibitem[PRSS91]{Pope_EtAl1991.Wtopological_matter_gravity}
C.~Pope, L.~Romans, E.~Sezgin, and X.~Shen, \emph{W-topological matter and gravity}, \href{https://doi.org/10.1016/0370-2693(91)90672-D}{Physics Letters B \textbf{256} (1991)}, no.~2, 191--198.

\bibitem[PSR90]{Pope_EtAl1990.infinity_Racahwigner_Algebra}
C.~N. Pope, X.~Shen, and L.~J. Romans, \emph{W(infinity) and the {{Racah-wigner Algebra}}}, \href{https://doi.org/10.1016/0550-3213(90)90539-P}{Nuclear Physics B \textbf{339} (1990)}, no.~1, 191--221.

\bibitem[Str18]{Strominger2018.Lectures_Infrared_Structure_Gravity_Gauge_Theory}
A.~Strominger, \emph{Lectures on the {{Infrared Structure}} of {{Gravity}} and {{Gauge Theory}}}, February 2018.

\bibitem[Str21]{Strominger2021.infinity_Celestial_Sphere}
\bysame, \emph{w(1+infinity) and the {{Celestial Sphere}}}, November 2021.

\bibitem[SZ14]{Strominger_Zhiboedov2014.Gravitational_Memory_BMS_Supertranslations_Soft_Theorems}
A.~Strominger and A.~Zhiboedov, \emph{Gravitational {{Memory}}, {{BMS Supertranslations}} and {{Soft Theorems}}}, November 2014.

\end{thebibliography}

\end{document}